\DeclareMathOperator*{\BOX}{\Box}
\newcommand{\eqcl}{\mathrel{\mathsmaller{\mathsmaller{^{\boldsymbol{\sqsubseteq}}}}}}
\newcommand{\DELTA}{\mathrel{\delta}}
\newtheorem{obs}{Observation}
\newcommand*{\N}{N_{\varphi}}
\newcommand*{\Npsi}{N_{\varpsi}}
\newcommand*{\No}{N_{\overline\varphi}}
\newcommand*{\G}{G_{\varphi}}
\newcommand*{\Gopsi}{G_{\overline\varpsi}}
\newcommand*{\Gi}{G_{\varphi_i}}
\newcommand*{\Gkp}{G_{\varphi_{k+1}}}
\newcommand*{\Go}{G_{\overline\varphi}}
\newcommand*{\Goi}{G_{\overline\varphi_i}}
\newcommand*{\Goj}{G_{\overline\varphi_j}}
\newcommand*{\Gokp}{G_{\overline\varphi_{k+1}}}
\newcommand*{\Goe}{G_{\overline\varphi_1}}
\newcommand*{\Gon}{G_{\overline\varphi_n}}
\newcommand*{\vp}{\varphi}
\newcommand*{\vpo}{\overline\varphi}
\newcommand*{\mc}{\mathcal}
\newcommand*{\inters}{V_R}
\newcommand{\varpsi}{\psi}
\def\jfootline{\iftitle\global\titlefalse
\vtop to0pt{\everypar={}\leftskip=0pt\rightskip=0pt
\vskip12pt
\iffootlinehrule%
\hrule height .3pt
\vskip8pt\else\vskip1sp\fi
\parindent=0pt \parskip=0pt \parfillskip=0pt 
\baselineskip=9pt \footlinefont
\phantom{\currstyle\ 
Vol.\ \thevolume, \thepages\ (\theyear)}\hfill\ \vskip1pt
\hbox to\textwidth{\phantom{\smcopyright\hskip3pt\theyear\ 
John Wiley \& Sons, Inc.\hfill CCC \thecccline}}
\vskip1pt
\iffootlinefolio
\vskip4pt
\hbox to\textwidth{\hfill\foliofont\the\c@page}\fi
\vss}\else\hfill\fi}
\def\abstract#1{{\leftskip=\abstractmargin \rightskip=\leftskip
\vskip\aboveabstractskip
\abstractsize\ifabstractname
\abstractnamefont\noindent 
ABSTRACT\vskip\belowabstractnameskip\fi\abstractfont
\noindent#1 {\footnotesize%
\phantom{\bf\smcopyright\hskip3pt\theyear\ John Wiley \& Sons,~Inc.}}
\vskip\belowabstractskip}}
\def\jheadline{\hbox to\textwidth{\iftitle\else\ifodd\c@page
{\hfill\headlinesize\headtextfont\thetitle}%
\enspace{\foliofont\the\c@page}
\else{\foliofont\the\c@page}
\enspace{\headlinesize\headtextfont\thetitle}\hfill\fi\fi}}
\begin{document}

\sloppy


\yearofpublication{(2012)}
\volume{ }
\issuenumber{ }
\cccline{ }

\authorrunninghead{Hellmuth, Ostermeier, Stadler}
\titlerunninghead{Square Property, Equitable Partitions, and Product-like Graphs}

\setcounter{page}{1} 



\title{Square Property, Equitable Partitions, and Product-like Graphs}

\author{Marc Hellmuth}
\affil{%
Center for Bioinformatics, 
Saarland University, Building E 2.1, 
D-66041 Saarbr{\"u}cken, Germany\\
{\scriptsize e-mail: marc.hellmuth@bioinf.uni-sb.de}}

\author{Lydia Ostermeier}
\affil{%
Max-Planck-Institute for Mathematics in the Sciences,
Inselstrasse 22, D-04103 Leipzig,  Germany \\[0.2cm]
Interdisciplinary Center for Bioinformatics
University of Leipzig,
H{\"a}rtelstrasse 16-18, D-04107 Leipzig, Germany\\
{\scriptsize e-mail: glydia@bioinf.uni-leipzig.de}}

\authors{Peter F.\ Stadler}
\affil{Bioinformatics Group, Department of Computer Science and
Interdisciplinary Center for Bioinformatics
University of Leipzig,
H{\"a}rtelstrasse 16-18, D-04107 Leipzig, Germany\\[0.2cm]
Max Planck Institute for Mathematics in the Sciences
Inselstrasse 22, D-04103 Leipzig, Germany\\[0.2cm]
RNomics Group,
  Fraunhofer Institut f{\"u}r Zelltherapie und Immunologie,
  Deutscher Platz 5e, D-04103 Leipzig, Germany\\[0.2cm]
Department of Theoretical Chemistry, University of Vienna,
  W{\"a}hringerstrasse 17, A-1090 Wien, Austria\\[0.2cm]
Santa Fe Institute, 1399 Hyde Park Rd., Santa Fe, NM87501,
  USA\\
{\scriptsize e-mail: studla@bioinf.uni-leipzig.de}}

\newpage

\abstract{Equivalence relations on the edge set of a graph $G$ that satisfy
  restrictive conditions on chordless squares play a crucial role in the
  theory of Cartesian graph products and graph bundles. We show here that
  such relations in a natural way induce equitable partitions on the vertex
  set of $G$, which in turn give rise to quotient graphs that can have a
  rich product structure even if $G$ itself is prime.}
\keywords{square property, unique square property, USP-relation, 
          quotient graph,  equitable partition, Cartesian graph product}

\begin{article} 

\section{Introduction}
\label{sec:Preliminaries}

Sabidussi \cite{Sabidussi:60} and later Vizing \cite{Viz:63} showed that
every finite connected graph has a unique prime factorization w.r.t.\ the
Cartesian product. This Cartesian product structure is naturally understood
in terms of an equivalence relation $\sigma$ on the edge set $E(G)$ that
identifies the fibers as the connected components of the subgraphs of $G$
that are induced by a single equivalence class of $\sigma$
\cite{Sabidussi:60}. The first polynomial time algorithm to compute the
factorization of an input graph \cite{Feigenbaum85:CartProd} explicitly
constructs $\sigma$ starting from another, finer, relation $\delta$. The
product relation $\sigma$ was later shown to be simply the convex hull
$\mathfrak{C}(\delta)$ of the relation $\delta$ \cite{Imrich:94}.

Graph bundles \cite{Pisanski:83}, the combinatorial analog of the
topological notion of a fiber bundle \cite{Husemoller:93}, are a common
generalization of both Cartesian products \cite{Hammack:11a} and covering
graphs \cite{Abello:91}. A slight modification of the relation $\delta$
turns out to play a fundamental role for the characterization of graph
bundles \cite{Zmazek:02} and forms the basis of efficient algorithms to
recognize Cartesian graph bundles \cite{Imrich:97,Zmazek:02a,Zmazek:02}.
Here we introduce a further generalization, termed USP-relations, that
still retains the salient properties of $\delta$.

The connected components of a given equivalence class of the product
relation $\sigma$, i.e., the fibers of $G$ w.r.t.\ to a given factor $F$,
form a natural partition $\mathcal{P}_F$ of the vertex set of $G$. It is
well known (see e.g.\ \cite{Hammack:11a}) that $G$ then has a
representation as $G \cong (G/\mathcal{P}_F) \square F$. It is of interest,
therefore, to consider quotient graphs of Cartesian products in a more
systematic way.

Equitable partitions of graphs \cite{FS:11,FPT:08} were originally introduced as a means of
simplifying the computation of graph spectra \cite{Schwenk:74} and walks on
graphs \cite{Godsil:80}.  A series of recent results on so-called perfect
state transfer revealed a close
connection between equitable partitions of the vertex set of $G$, the
corresponding quotient graphs, and the Cartesian product structure of $G$
\cite{BFF+12,YGB+11}.

Here, we show that equitable partitions on the vertex set $V(G)$ are
induced in a natural way in a more general setting, namely by equivalence
relations that are coarsenings of relations with the unique square property
on the edge set $E(G)$. The quotient graphs w.r.t.\ these equitable
partitions exhibit a natural, rich product structure even when $G$ itself
is prime. It can therefore be regarded as an ``approximate graph
  product'', albeit in a somewhat different sense than the deviations from
  product structures explored e.g.\ in \cite{HIKS-08,HIKS-09,Hel:11}.

\section{Background and Preliminaries}
\label{sec:prelim}

\subsection{Basic Definitions and Notation}

In the following we assume that $G$ is a finite connected graph with vertex
set $V=V(G)$ and edge set $E=E(G)$. A graph $H$ is a subgraph of $G$ if
$V(H)\subseteq V(G)$ and $E(H)\subseteq E(G)$. $H$ is an \emph{induced
  subgraph} of $G$ if $x,y\in V(H)$ and $(x,y)\in E(G)$ implies $(x,y)\in
E(H)$.  An induced cycle on four vertices is called \emph{chordless
  square}.

\bigskip\noindent\textbf{Relations.}
We will consider equivalence relations $R$ on $E$, i.e., $R\subseteq
E\times E$ such that (i) $(e,e)\in R$, (ii) $(e,f)\in R$ implies $(f,e)\in
R$ and (iii) $(e,f)\in R$ and $(f,g)\in R$ implies $(e,g)\in R$. The
equivalence classes of $R$ will be denoted by Greek letters,
$\varphi\subseteq E$. We will furthermore write $\varphi \eqcl R$ for 
mean that $\varphi$ is an equivalence class of $R$. 

A relation $Q$ is finer than a relation $R$ while the relation $R$ is
coarser than $Q$ if $(e,f)\in Q$ implies $(e,f)\in R$, i.e, $Q\subseteq
R$. In other words, for each class $\vartheta$ of $R$ there is a collection
$\{ \chi | \chi\subseteq \vartheta\}$ of $Q$-classes, whose union equals
$\vartheta$. Equivalently, for all $\varphi\eqcl Q$ and $\psi\eqcl R$ we
have either $\varphi\subseteq \psi$ or $\varphi\cap\psi=\emptyset$.

To make this paper easier to read we denote in the following refinements of
a given relation $R$ by $Q$ and coarse grainings of $R$ by $S$, so that
$Q\subseteq R \subseteq S$.

For a given equivalence class $\varphi \eqcl R$ and a vertex $u\in V(G)$ we
denote the set of neighbors of $u$ that are incident to $u$ via an edge in
$\varphi$ by $N_{\varphi}(u)$, i.e.,
\begin{equation*}
N_{\varphi}(u):= \{v\in V(G)\mid [u,v]\in\varphi\} \,.
\end{equation*}

\bigskip\noindent\textbf{Equitable Partitions.}  A partition $\mathcal{P}$
of the vertex set $V(G)$ of a graph $G$ is \emph{equitable} if, for all
(not necessarily distinct) classes $A,B\in\mathcal{P}$ every vertex $x\in
A$ has the same number
$$ m_{AB} :=  |N_G(x) \cap B| $$
of neighbors in $B$.  The matrix $\mathbf{M}=\{m_{AB}\}$ is called
\emph{partition degree matrix}.  

\bigskip\noindent\textbf{Quotient Graphs.} Let $G$ be a graph and
$\mathcal{P}$ be a partition of $V(G)$. The (undirected) \emph{quotient
  graph} $G/\mathcal{P}$ has as its vertex set $\mathcal{P}$, i.e., the
classes of the partition. There is an edge $[A,B]$ for $A,B\in\mathcal{P}$
if and only if there are vertices $a\in A$ and $b\in B$ such that $[a,b]\in
E(G)$. Note that there is a loop $[A,A]$ unless the class $A$ of
$\mathcal{P}$ is an independent set.  

\bigskip\noindent\textbf{Weighted Quotient Graphs.} 
Let $G$ be a graph and let $\mathcal{P}$ be an equitable partition of
$V(G)$ with partition degree matrix $\mathbf{M}$. The \emph{directed
  weighted quotient graph} $\overrightarrow{G/\mathcal{P}}$ has vertex set
$V(\overrightarrow{G/\mathcal{P}}) = \mathcal{P}$ and directed edges
$(A,B)$ from $A$ to $B$ with weight $m_{AB}$ iff $m_{AB}\ge1$. Note that
$\overrightarrow{G/\mathcal{P}}$ has loops whenever $m_{AA}\geq 1$.

By construction, $m_{AB}\geq 1$ implies $m_{BA}\geq 1$. Hence
$\overrightarrow{G/\mathcal{P}}$ has a well-defined underlying undirected and
unweighted graph, which obviously coincides with $G/\mathcal{P}$. The
underlying simple graph, obtained by also omitting the loops, will be
denoted by $\mathcal{N}(\overrightarrow{G/\mathcal{P}})=
\mathcal{N}(G/\mathcal{P})$.

\bigskip\noindent\textbf{Cartesian Graph Product.} 
The \emph{Cartesian product} $G\BOX H$ has vertex set $V(G\BOX H) =
V(G)\times V(H)$; two vertices $(g_1,h_1)$, $(g_2,h_2)$ are adjacent in
$G\BOX H$ if $(g_1,g_2)\in E(G)$ and $h_1=h_2$, or $(h_1,h_2)\in E(G_2)$
and $g_1 = g_2$.

Cartesian products generalize in a natural way to directed edge-weighted
graphs (with loops allowed). Their Cartesian product $G\BOX H$ has the 
edge weights 
\begin{equation}
  \label{eq:weights}
  m((g_1,h_1),(g_2,h_2))= 
  \begin{cases}
    m_G(g_1,g_2),\qquad &\text{iff } h_1=h_2\text{ and }g_1\neq g_2\\
    m_H(h_1,h_2),\qquad &\text{iff } g_1=g_2\text{ and }h_1\neq h_2\\
    m_G(g_1,g_2)+m_H(h_1,h_2),\quad &\text{iff } g_1=g_2\text{ and }h_1=h_2\\
    0,\qquad&\text{otherwise}
  \end{cases}
\end{equation}
where $m_G(g_1,g_2)$ and $m_H(h_1,h_2)$ denotes the edge weight of the arcs
$(g_1,g_2)$ in $G$ and $(h_1,h_2)$ in $H$, resp., The absence of such an
arc is equivalent to $m_X(x_1,x_2)=0$ for $X\in\{G,H\}$. The Cartesian
product of the underlying undirected and unweighted graphs is obtained by ignoring the
directions and weights in the product graph.

The Cartesian product is associative and commutative. Every finite
connected graph $G$ has a decomposition $G=\BOX_{i=1}^n G_i$ into prime
factors that is unique up to isomorphism and the order of the
factors \cite{Sabidussi:60}.

The mapping $p_i: V(\Box_{i=1}^n G_i)\rightarrow V(G_i )$ defined by
$p_i(v) = v_i$ for $v = (v_1,v_2,\ldots,v_n)$ is called \emph{projection}
on the $i$-th factor of $G$. The induced subgraph $G_i^w$ of $G$ with
vertex set $V(G_i^w)=\{v\in V(G)\mid p_j(v)=w_j, \text{ for all } j\neq
i\}$ is called \emph{$G_i$-layer through $w$}. It is isomorphic to $G_i$.

An equivalence relation $R$ on the edge set $E(G)$ of a Cartesian
product $G=\Box_{i=1}^n G_i$ of (not necessarily prime) graphs $G_i$
is a \emph{product relation} if $e \mathrel{R} f$ holds  
if and only if there exists a $j\in \{1,\ldots,n\}$ such that 
$|p_j(e)|=|p_j(f)|=2$.

\bigskip\noindent\textbf{Cartesian Graph Bundles.}  Given two graphs $G$
and $H$, a map $p:G\to H$ is called a \emph{graph map} if $p$ maps adjacent
vertices of $G$ to adjacent or identical vertices in $B$ and edges of $G$
to edges or vertices of $B$. Graph maps are also known as weak
homomorphisms \cite{Hammack:11a}. For instance, the projections $p_i$ of
product graphs to their factors are graph maps.

A graph $G$ is a \emph{Cartesian graph bundle} if there are two graphs $F$,
the fiber, and $B$ the base graph, and a graph map $p:G\to B$ such that:
For each vertex $v\in V(B)$, $p^{-1}(v)\cong F$ and for each edge $e\in
E(B)$ we have $p^{-1}(e)\cong K_2\square F$. The triple $(G,p,B)$ is called
a presentation of $G$ as a Cartesian graph bundle. If $G=\Box_{i=1}^n G_i$
is a product, then $(G,p_j,G_j)$ is a bundle presentation of $G$ with fiber
$\Box_{i=1, i\neq j}^n G_i$ for all $1\leq j\leq n$.

\subsection{From Edge Partitions to Vertex Partitions}

We start from an equivalence relation $R$ on $E(G)$. Let $\varphi\eqcl R$.
An edge $e$ is called $\vp$-edge if $e\in \vp$.  The subgraph $G_{\varphi}$
has vertex set $V(G)$ and edge set $\varphi$. The connected components of
$G_{\varphi}$ containing vertex $x\in V(G)$ are denoted by $G_{\varphi}^x$.

By construction, the set
\begin{equation*}
\mathcal{P}^R_{\varphi}:=\left\{V(G^x_{\varphi})\mid x\in V(G)\right\}
\end{equation*} 
is a partition of $V(G)$ for every $\vp\eqcl R$. The quotient graph 
$G/\mathcal{P}^R_{\varphi}$ has as its vertex sets the connected
components $G^x_{\varphi}$ and edges $(G^x_{\varphi},G^y_{\varphi})$
if and only if there are $x'\in V(G^x_{\varphi})$ and $y'\in V(G^y_{\varphi})$
with $(x',y')\in E(G)$.

The projection $p_{\varphi}:G\to G/P_{\varphi}^R$ defined by $x \mapsto
G^x_{\varphi}$ is a graph map. If $(x,y)\in \varphi$ then $y\in
V(G^x_{\varphi})$ and hence $G^x_{\varphi}=G^y_{\varphi}$. Thus, we have a
loop in the quotient graph $G/\mathcal{P}^R_{\varphi}$ for every
$V(G^x_{\varphi})\ne\{x\}$. Edges that do not form a loop 
in $G/P_{\varphi}^R$ thus arise only from $(x,y)\in E\setminus\varphi$.

In the following we will be interested in particular in the complements of
$R$-classes, i.e., in $\overline{\varphi} := E\setminus \varphi$. The
corresponding subgraphs are denoted by $G_{\overline{\varphi}}$, with
connected components $G_{\vpo}^x$ for a given $x\in V(G)$.  For later
reference we note following simple
\begin{obs}
\label{obs:not-vp-path}
It holds $y\in V(\Go^x)$ if and only if there is a path
$P:=(x=x_0,x_1,\ldots x_k=y)$ from $x$ to $y$ such that
$[x_i,x_{i+1}]\notin \varphi$ for all $0\leq i\leq k-1$.
\end{obs}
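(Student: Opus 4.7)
The plan is to unfold the definitions; the statement is essentially a reformulation of what it means to lie in a connected component of the subgraph $\Go$. Recall that by definition $\Go$ has vertex set $V(G)$ and edge set $\overline\varphi = E(G)\setminus\varphi$, and $\Go^x$ denotes the connected component of $\Go$ that contains $x$.

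For the forward direction, I would assume $y\in V(\Go^x)$ and apply the standard fact that two vertices lie in the same connected component of a graph if and only if they are joined by a walk (equivalently, by a path) in that graph. This yields a path $P=(x=x_0,x_1,\ldots,x_k=y)$ in $\Go$, and because every edge of $\Go$ belongs to $\overline\varphi$, we immediately get $[x_i,x_{i+1}]\notin\varphi$ for all $0\leq i\leq k-1$.

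For the converse, I would start from such a path $P$ in $G$ with $[x_i,x_{i+1}]\notin\varphi$ for every $i$. Since $\overline\varphi=E(G)\setminus\varphi$, each edge $[x_i,x_{i+1}]$ lies in $\overline\varphi=E(\Go)$, so $P$ is in fact a path in $\Go$ from $x$ to $y$. Hence $x$ and $y$ belong to the same connected component of $\Go$, i.e., $y\in V(\Go^x)$.

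There is no real obstacle here: both directions are immediate once one translates the notation $\Go^x$ back to ``connected component of the spanning subgraph with edge set $\overline\varphi$''. The observation is stated explicitly only because it will be cited repeatedly when arguing about non-$\varphi$ paths in the sequel.
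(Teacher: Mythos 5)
Your proof is correct and follows exactly the intended reading: the paper states this Observation without proof, treating it as an immediate unfolding of the definition of $\Go^x$ as the connected component of the spanning subgraph with edge set $\overline\varphi = E(G)\setminus\varphi$, which is precisely what you do in both directions. Nothing is missing; the walk-versus-path point is harmless since any walk in $\Go$ can be shortened to a path.
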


Just like $\mathcal{P}^R_{\varphi}$, the set
\begin{equation}\label{eq:Povp}
\mathcal{P}^R_{\vpo}:=\left\{V(\Go^x)\mid x\in V(G)\right\}
\end{equation} 
is a partition of $V(G)$ for every $\vp\eqcl R$. To see this, we note that 
$x\in V(\Go^x)$ holds for all $x\in V(G)$. Thus, $P\neq\emptyset$ for all 
$P\in \mathcal P^R_{\vpo}$ and $\bigcup_{P\in \mathcal P^R_{\vpo}} P =
V(G)$. Furthermore, $V(\Go^x) \cap V(\Go^y) \neq \emptyset$ if
and only if $x$ and $y$ are in same connected component w.r.t.\ $\vpo$, i.e.,
if and only if $V(\Go^x) = V(\Go^y)$. 
Note, Graham and Winkler showed in \cite{GrahamWinkler:84} that 
the particular defined equivalence relation $R=\theta^*$ on $E(G)$, the so-called 
Djokovi\'{c}-Winkler relation, induces a canonical isometric embedding of 
a graph $G$ into a  Cartesian product 
$\Box_{\varphi \eqcl R} G_{\varphi} / \mc P_{\vpo}^{R}$.
Moreover, Feder showed that 
if we choose $R=(\theta \cup \tau)^*$ then 
$G \cong \Box_{\varphi \eqcl R} G_{\varphi} / \mc P_{\vpo}^{R}$
and thus,  $R$ is the product relation $\sigma$, see
\cite{Feder:92}.

We furthermore will need the intersections
\begin{equation*}
\inters(x) := \bigcap_{\vp \eqcl R } V( G_{\vpo}^x)\,.
\end{equation*}
These sets form the classes of the common refinement 
of the partitions $\mathcal{P}^R_{\vpo}$, $\varphi\eqcl R$, i.e.,
\begin{equation}\label{eq:mcPR}
\mathcal{P}^R :=\left\{\bigcap_{\vp \eqcl R } 
  V(G_{\overline\varphi}(x))\mid x\in V(G)\right\}
= \left\{\inters(x)\mid x\in V(G)\right\}
\end{equation}
is again a partition of $V(G)$. 

\begin{lemma}
Let $Q$ and $R$ be two equivalence relations on $E(G)$ so that 
$Q$ is finer than $R$. Then $V_R(x)\subseteq V_Q(x)$.
\end{lemma}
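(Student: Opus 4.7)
The plan is to unpack the definition $V_R(x) = \bigcap_{\vp \eqcl R} V(\Go^x)$ and show the containment class-by-class. Fix $y \in V_R(x)$; I want to show that for every $Q$-class $\chi$, the vertex $y$ lies in $V(G_{\overline\chi}^x)$, which will give $y \in V_Q(x)$.

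The key observation is the following duality between refinement and complement: since $Q \subseteq R$, every $Q$-class $\chi$ is contained in a (unique) $R$-class $\vartheta$. Taking complements in $E(G)$ reverses the inclusion, so $\overline{\vartheta} \subseteq \overline{\chi}$. Therefore $G_{\overline\vartheta}$ is a spanning subgraph of $G_{\overline\chi}$, and every connected component of $G_{\overline\vartheta}$ is contained in a connected component of $G_{\overline\chi}$. In particular $V(G_{\overline\vartheta}^x) \subseteq V(G_{\overline\chi}^x)$.

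Now, since $y \in V_R(x)$, the intersection definition gives $y \in V(G_{\overline\vartheta}^x)$ for the particular $R$-class $\vartheta \supseteq \chi$ chosen above. By the inclusion of components just noted, $y \in V(G_{\overline\chi}^x)$. One may alternatively phrase this step via Observation~\ref{obs:not-vp-path}: a path from $x$ to $y$ avoiding $\vartheta$-edges automatically avoids $\chi$-edges, since $\chi \subseteq \vartheta$.

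Since the $Q$-class $\chi$ was arbitrary, $y \in \bigcap_{\chi \eqcl Q} V(G_{\overline\chi}^x) = V_Q(x)$, which completes the proof. I do not anticipate any real obstacle; the statement is essentially a formal consequence of the fact that intersecting over more (and finer) classes gives a smaller set, combined with the observation that finer partitions of the edge set yield larger complementary subgraphs and hence larger connected components.
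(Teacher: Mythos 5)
Your proof is correct and follows essentially the same route as the paper's: both arguments rest on the observation that $\chi\subseteq\vartheta$ forces $\overline{\vartheta}\subseteq\overline{\chi}$, so $G_{\overline{\vartheta}}$ is a spanning subgraph of $G_{\overline{\chi}}$ and the connected components through $x$ nest accordingly. The only difference is presentational: the paper chains these inclusions as a set-level computation using $\overline{\varphi\cup\psi}=\overline{\varphi}\cap\overline{\psi}$, while you run the same monotonicity argument element-wise for a fixed $y\in V_R(x)$.
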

\begin{proof}
Consider two equivalence classes $\varphi,\varpsi\eqcl Q$. 
From $\overline{\varphi\cup\varpsi}=\overline{\varphi}\cap\overline{\varpsi}$ 
we observe that $G_{\overline{\varphi\cup\varpsi}}$ is a subgraph of both
$G_{\overline{\varphi}}$ and $G_{\overline{\varpsi}}$. This remains true
for the connected components containing a given vertex $x\in V$, and hence
\begin{equation*}
  V(G^x_{\overline{\varphi\cup\varpsi}})\subseteq
  V(G^x_{\overline{\varphi}}) \cap V(G^x_{\overline{\varpsi}})\,.
\end{equation*}
Using this observation we compute  
\begin{equation*}
V_R(x) =
\bigcap_{\vartheta\eqcl R} V\left( G^x_{\overline{\vartheta}} \right)
=
\bigcap_{\vartheta\eqcl R} V\left(
   G^x_{\overline{\bigcup_{\chi\subseteq\vartheta}\chi}} \right)
   \subseteq
\bigcap_{\vartheta\eqcl R} \bigcap_{\chi\subseteq\vartheta}
    V\left( G^x_{\overline{\chi}} \right) =
\bigcap_{\chi\eqcl Q} V\left( G^x_{\overline{\chi}} \right) =
V_Q(x)
\end{equation*}
\end{proof}

\par\noindent
Thus, a coarser equivalence relation $R$ on $E(G)$ leads to smaller sets
$V_R(x)$, and hence to a finer partition $\mathcal{P}^R$ of the vertex set.

\subsection{The Square Property and the Unique Square Property}

\begin{definition}
Two edges $e,f\in E(G)$ are in the \emph{relation $\delta$}, $e\DELTA f$, 
if one of the following conditions is satisfied:
\begin{itemize}
\item[(i)]   $e$ and $f$ are opposite edges of a chordless square.
\item[(ii)]  $e$ and $f$ are adjacent and there is no chordless square 
             containing $e$ and $f$. 
\item[(iii)] $e=f$.
\end{itemize}   
\end{definition}

\begin{definition}
  An equivalence relation $R$ on $E(G)$ has the 
  \emph{unique square property} if it satisfies
  \begin{itemize}
  \item[(S1)] Any two adjacent edges $e$ and $f$ from distinct equivalence
  classes span a unique chordless square with opposite edges in the same
  equivalence class $R$.
  \end{itemize}
  $R$ has the \emph{square property} if it satisfies in addition 
  \begin{itemize}
  \item[(S2)] The opposite edges of any chordless square belong to the 
    same equivalence class.
  \end{itemize}
\end{definition}

The unique square property was introduced by Zmazek and 
{\v{Z}}erovnik \cite{Zmazek:02} as a feature of the so-called 
fundamental factorizations of graph bundles over simple 
bases. The results derived below therefore hold in particular 
also for this type of graph bundles.

Relations with the unique square property do not need to satisfy the
square property as shown by the counterexample in Figure
\ref{fig:CounterUSP}. On the other hand, from the definition it is clear,
that every equivalence relation $R$ on $E(G)$ that has the 
square property also has the unique square property.
It
has been noted, e.g.\ in \cite{Zmazek:02}, that $\DELTA$ has the unique
square property.

\begin{figure}[tbp]
  \centering
  \includegraphics[bb= 140 412 304 501, scale=0.9]{./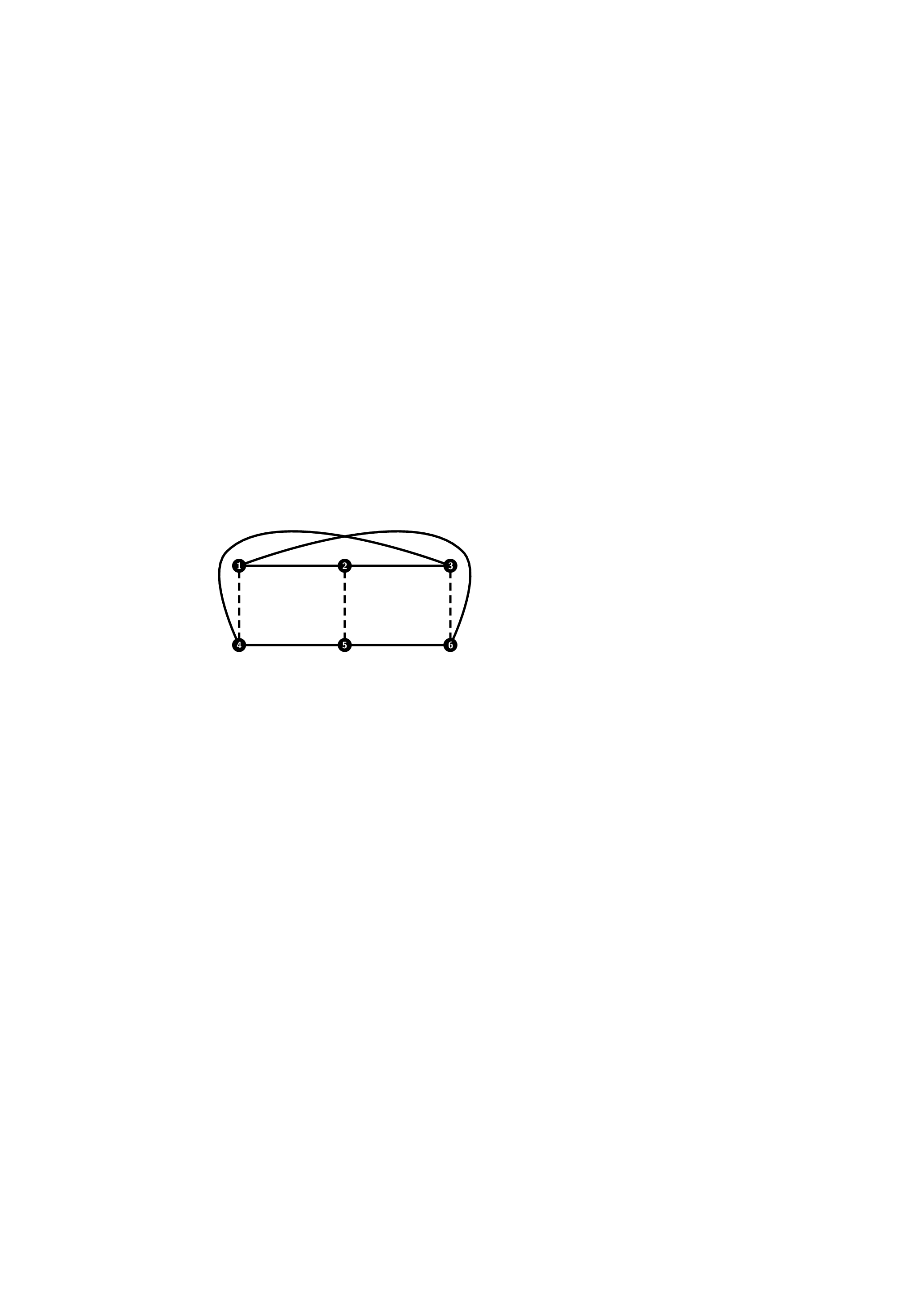}
  \caption{The square property and the unique square property are not 
    equivalent. The line styles distinguish the two classes of the
    equivalence relation $R$ on the edges. It has the unique square
    property (S1).  The edges $[1,2]$ and $[1,4]$ span two
    chordless squares $Sq_1=[1,2,5,4]$ and $Sq_2=[1,2,3,4]$ of which  
    $Sq_1$ has opposite edges in the same equivalence class. The square 
    $Sq_2$ thus violates (S2).}
 \label{fig:CounterUSP}
\end{figure}

The following observation has been used implicitly e.g.\ in 
\cite{Imrich:97,Zmazek:02}.
\begin{proposition}
An equivalence relation $R$ on $E(G)$ has the square property if and only if  
$\delta\subseteq R$. 
\end{proposition}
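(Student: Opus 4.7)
I would prove the two implications separately, handling the backward direction in a second stage where the uniqueness assertion in (S1) is the main difficulty.

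\medskip

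\noindent\emph{Forward direction ($R$ has the square property $\Rightarrow$ $\delta\subseteq R$):} Given $(e,f)\in\delta$, I would dispatch the three cases of the definition directly. In case (i), $e,f$ are opposite edges of a chordless square, so (S2) yields $(e,f)\in R$. In case (ii), $e,f$ are adjacent with no chordless square containing them; if they lay in distinct $R$-classes, then (S1) would produce such a chordless square, a contradiction, so they share a class. Case (iii) is reflexivity. No real obstacle here.

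\medskip

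\noindent\emph{Backward direction ($\delta\subseteq R\Rightarrow R$ has the square property):} Property (S2) is immediate: for any chordless square, each pair of opposite edges is $\delta$-related by case (i), hence $R$-related. For (S1), let $e=[u,v]$ and $f=[u,w]$ be adjacent edges in distinct $R$-classes. Existence of a chordless square containing them is forced because otherwise case (ii) of $\delta$ would give $e\mathrel{\delta}f$, and hence $e\mathrel{R}f$, contradicting that the classes are distinct. Fix such a chordless square $Q=[u,v,x,w]$; by (S2) its opposite edges already lie in common classes.

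\medskip

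\noindent The hard part is uniqueness. I would argue by contradiction: suppose a second chordless square $Q'=[u,v,y,w]$ with $y\neq x$ also contains $e$ and $f$. Using (S2) on both squares, $[v,x]$ and $[v,y]$ lie in the class of $f$ while $[w,x]$ and $[w,y]$ lie in the class of $e$. Now I would split on whether the edge $[x,y]$ exists in $G$. If $[x,y]\notin E(G)$, the 4-cycle $v$–$x$–$w$–$y$–$v$ is chordless (the chord $[v,w]$ is excluded by chordlessness of $Q$), so by (S2) the opposite pair $([v,x],[w,y])$ must lie in one class, merging the classes of $f$ and $e$. If instead $[x,y]\in E(G)$, then any 4-cycle through $\{v,x,y\}$ (resp.\ $\{w,x,y\}$) contains the chord $[v,y]$ (resp.\ $[w,y]$), so no chordless square contains the adjacent pair $[v,x],[x,y]$ or the pair $[w,x],[x,y]$; case (ii) of $\delta$ then forces $[x,y]$ into the class of $[v,x]$ (i.e.\ of $f$) and simultaneously into the class of $[w,x]$ (i.e.\ of $e$), again collapsing the two classes. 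Both sub-cases contradict the hypothesis on $e,f$, so $Q'=Q$. The main obstacle is exactly this case analysis on $[x,y]$, which is what lets the $\delta\subseteq R$ hypothesis, together with (S2), rule out a second square.
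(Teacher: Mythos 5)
Your proof is correct, and its overall skeleton coincides with the paper's: both directions are split the same way, the forward direction dispatches the three cases of $\delta$ against (S1)/(S2) exactly as the paper does, and in the backward direction (S2) comes from case (i), existence of the square required by (S1) comes from case (ii), and uniqueness is proved by contradiction from a hypothetical second chordless square $(u,v,y,w)$. The difference lies in the uniqueness step, and there your argument is actually tighter than the paper's. The paper forms the third $4$-cycle $(v,x,w,y)$ out of the two squares and immediately applies condition (i) of $\delta$ to its opposite edge pairs, which tacitly assumes this cycle is chordless; the potential chord $[x,y]$ is never ruled out (only $[v,w]$ is excluded, by chordlessness of the original squares). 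Your case analysis covers precisely this overlooked situation: when $[x,y]\notin E(G)$ your argument reduces to the paper's, and when $[x,y]\in E(G)$ you use the edges $[v,y]$ and $[w,y]$ of the second square as forced chords to show that no chordless square can contain the adjacent pair $[v,x],[x,y]$ or the pair $[w,x],[x,y]$, so case (ii) of $\delta$ places $[x,y]$ simultaneously in the class of $f$ and in the class of $e$, collapsing the two classes. What your route buys is completeness: it closes a genuine, if small, gap in the published argument at the cost of one extra case; what the paper's shorter route buys is brevity, but only by leaving that case implicit.
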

\begin{proof}
  Let $R$ be an equivalence relation on $E(G)$ and $\delta\subseteq
  R$. Then Condition $(i)$ in the definition of $\DELTA$ directly
  implies Condition (S2).
  Let $e,f$ be two adjacent edges and suppose $(e,f)\notin R$.
  Then there must exist a square containing both edges, otherwise,
  by condition $(ii)$, $(e,f)\in\delta\subseteq R$,  a contradiction.
  Let this square consist of edges $e,f,e',f'$
  such that $e'$ is opposite edge to $e$ and $f'$ is opposite edge to $f$.
  Then condition $(i)$ implies $(e,e'),(f,f')\in\delta\subseteq R$.
  Assume $e,f$ are contained in another square consisting of edges
  $e,f,e'',f''$ such that
  $e''$ is opposite edge to $e$ and $f''$ is opposite edge to $f$.
  Then there is also a square consisting of edges $e',f',f'',e''$
  such that $e''$ is opposite edge to $f'$ and $f''$ is opposite edge to $e'$.
  Again condition $(i)$ implies $(e,e''),(f,f'')\in\delta\subseteq R$ as well as
  $(e'',f'),(f'',e')\in\delta\subseteq R$.
  Finally, by transitivity it follows $(e,f)\in R$, a contradiction.
  Thus, $R$ has the square property.

  Let $R$ be an equivalence relation on $E(G)$ with the square property. We
  have to show that $e\DELTA f$ implies $e\mathrel{R}f$ for all edges
  $e,f$. First assume $e\DELTA f$ such that $e$ and $f$ are not
  adjacent. Hence, either Condition (i) or (iii) is fulfilled which
  immediately implies $e\mathrel{R}f$. Now, let $e$ and $f$ be adjacent and
  assume for contraposition that $e\not\mathrel{R}f$.  Thus, by condition
  $(S1)$ there is a chordless square spanned by $e$ and $f$ and therefore
  $e$ and $f$ do not satisfy condition $(ii)$. Hence, $e\not\DELTA f$ which
  completes the proof.
\end{proof}

The transitive closure $\delta^*$ of $\delta$ is therefore the
finest equivalence relation on $E(G)$ that has the square
property. Furthermore, an equivalence relation $R$ has the square property
if and only if its classes are unions of equivalence classes of
$\delta^*$. Therefore, if $R$ has the square property and $R\subseteq S$,
then the coarser equivalence relation $S$ also has the square property.

In contrast, there is no finest equivalence relation that has the unique
square property. Moreover, if an equivalence relation $R$ satisfies the
unique square property, it is still possible that there exists
a coarser equivalence relation $S\supset R$ that does not have the
unique square property, as shown by the example in
Figure~\ref{fig:CounterUnion}.

\begin{figure}[tbp]
  \centering
  \includegraphics[bb= 250 590 400 730, scale=0.75]{./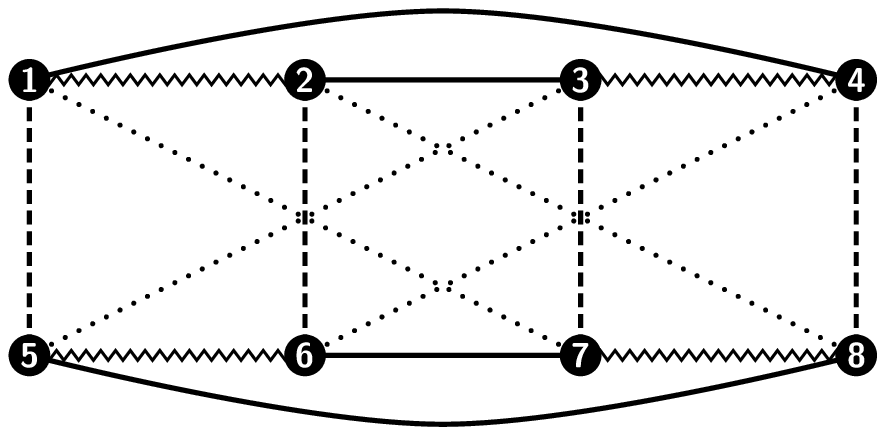}
  \caption{The equivalence relation $Q$ on the edge set $E(G)$ of the
    ``diagonalized  cube'' $G$ has the four equivalence classes
    $\vp_1,\vp_2,\vp_3$ and $\vp_4$ depicted by solid, zigzag, dotted and
    dashed edges, respectively.  One easily checks that $Q$ has the
    unique square property. The relation $R$ with classes
    $\psi_1=\vp_1\cup\vp_2$ and $\psi_2=\vp_3\cup\vp_4$, however, does
    not have the unique square property, because the edges $[1,5]$ and
    $[1,2]$ span \emph{two} squares $(1,5,6,2)$ and $(1,5,6,4)$ with
    opposite edges belonging to the same class. 
    Clearly, $R$ is a USP-relation}
  \label{fig:CounterUnion}
\end{figure}

This observation motivates us to consider a slightly more general 
set of equivalence relations. 

\begin{definition}
  An equivalence relation $R$ on the edge set of a connected graph $G$ 
  is called a \emph{USP-relation} if there exists a finer equivalence 
  relation $Q\subseteq R$ that satisfies the unique square property.
\end{definition}

\begin{obs} 
\label{obs:squares} 
  If $R$ is a USP-relation on the edge set of a graph $G$, then any two
  adjacent edges of distinct $R$-classes span a (not necessarily unique)
  square with opposite edges in the same equivalence class.
\end{obs}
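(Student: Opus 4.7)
The plan is to lift the unique square property on the finer relation $Q$ up to the coarser relation $R$, using two simple facts: (i) classes of a finer relation sit inside classes of the coarser one, and (ii) the property (S1) already guarantees the existence of the desired square for $Q$.

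First I would unpack the hypothesis: since $R$ is a USP-relation, there exists by definition a finer equivalence relation $Q \subseteq R$ on $E(G)$ that has the unique square property. The key observation is that if two edges $e,f$ lie in distinct $R$-classes, then they must also lie in distinct $Q$-classes; otherwise $(e,f) \in Q \subseteq R$ would force them into the same $R$-class, a contradiction. This is where the refinement hypothesis does its work.

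Next I would apply (S1) directly to $Q$: any two adjacent edges $e,f$ in distinct $Q$-classes span a unique chordless square whose opposite edges lie in the same $Q$-class. Writing the square as $e, f, e', f'$ with $e'$ opposite to $e$ and $f'$ opposite to $f$, this yields $(e,e') \in Q$ and $(f,f') \in Q$. Since $Q \subseteq R$, the same pairs belong to $R$, so the opposite edges of this chordless square lie in the same $R$-class, which is exactly what the observation asserts.

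There is essentially no obstacle here; the only thing to be careful about is the word \emph{unique}. The uniqueness in (S1) is a statement about the relation $Q$, not about $R$: coarsening $Q$ to $R$ can merge classes and produce additional squares with opposite edges in the same $R$-class (compare Figure~\ref{fig:CounterUnion}), which is precisely why the observation is phrased with the parenthetical ``not necessarily unique''. So I would end the proof with a brief remark acknowledging that only existence, and not uniqueness, transfers from $Q$ to $R$.
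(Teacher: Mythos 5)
Your proof is correct and follows exactly the intended argument: the paper states this as an unproved Observation precisely because the reasoning is immediate from the definitions, namely that edges in distinct $R$-classes lie in distinct $Q$-classes (since $Q\subseteq R$), so (S1) for $Q$ supplies the square, and $Q\subseteq R$ transfers the class-membership of the opposite edges up to $R$. Your closing remark about why uniqueness fails to transfer (cf.\ Figure~\ref{fig:CounterUnion}) is also exactly the point the parenthetical in the statement is making.
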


In the remainder of this section we collect several basic properties of
USP-relations. These results have originally be obtained for the
relation $\delta$ in the context of graph products and later were
generalized to the unique square property for applications to Cartesian
graph bundles. Here we show that the statements remain true for
USP-relations.

\begin{lemma}\label{lem:incidence} 
  Let $R$ be a USP-relation on the edge set of a connected graph $G$.
  Then each vertex of $G$ is incident to at least one edge of each 
  $R$-class.
\end{lemma}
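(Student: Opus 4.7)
The plan is to fix an arbitrary $R$-class $\psi$ and show that every vertex $v\in V(G)$ is incident to at least one edge of $\psi$. Since $G$ is connected and $\psi$ is nonempty, the natural strategy is to propagate the property ``is incident to a $\psi$-edge'' along single edges of $G$ starting from an endpoint of some fixed edge of $\psi$.

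Concretely, I would pick any edge $e_0=[a,b]\in\psi$ and let
\[
S := \{u\in V(G) \mid \text{there exists } x\in V(G) \text{ with } [u,x]\in\psi\}.
\]
Then $a,b\in S$, so $S\neq\emptyset$. To conclude $S=V(G)$ it suffices to prove that $S$ is closed under adjacency in $G$: whenever $u\in S$ and $w$ is a neighbor of $u$, then $w\in S$. Combined with connectedness of $G$, this at once gives $S=V(G)$, which is the statement of the lemma.

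The heart of the argument is the closure step, and it is essentially a single application of Observation \ref{obs:squares}. Suppose $u\in S$ is witnessed by some edge $f=[u,x]\in\psi$, and let $w$ be a neighbor of $u$. If $[u,w]\in\psi$, then $[u,w]$ itself witnesses $w\in S$. Otherwise $[u,w]$ and $f$ are adjacent edges in distinct $R$-classes, so by Observation \ref{obs:squares} they span a chordless square $(u,w,y,x)$ with opposite edges lying in a common $R$-class. In particular, the edge $[w,y]$ opposite to $f=[u,x]$ lies in the same $R$-class as $f$, namely $\psi$, so $w\in S$ via $[w,y]$.

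I do not anticipate a real obstacle. The USP hypothesis enters only through Observation \ref{obs:squares}, which already hides the finer relation $Q$; the rest is a routine connectedness sweep. The only point that genuinely deserves care is the reading of ``opposite edges in the same equivalence class'' as applying separately to each of the two pairs of opposite edges in the square, which is precisely what licenses the conclusion that $[w,y]$ sits in the same class as $[u,x]$.
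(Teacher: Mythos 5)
Your proof is correct, but it takes a genuinely different route from the paper. The paper's own proof is a citation-plus-coarsening argument: it invokes the fact that the statement was already proved for the relation $\delta$ in \cite{Feigenbaum85:CartProd} and for relations with the unique square property in \cite{Zmazek:02}, and then notes that the property trivially survives coarsening, since every $R$-class of a USP-relation is a nonempty union of classes of the finer relation $Q$, so a vertex meeting an edge of every $Q$-class certainly meets an edge of every $R$-class. Your argument is instead self-contained: you run the square-propagation directly at the level of the coarse relation $R$, with Observation~\ref{obs:squares} absorbing the only place where the hidden finer relation $Q$ matters, and you finish with a routine connectedness sweep. Both the propagation step (transporting incidence to a $\psi$-edge across an edge $[u,w]$ via the square spanned by $[u,w]$ and $[u,x]$) and your reading of ``opposite edges in the same equivalence class''---each opposite pair lies in the class of the edge it is opposite to---are exactly the mechanism and the interpretation the paper itself uses later in the proof of Lemma~\ref{lem:neighbors}, where a path is lifted edge by edge, so your argument is fully consistent with the paper's framework. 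What the paper's route buys is brevity and an explicit anchoring in the prior literature on $\delta$ and graph bundles; what yours buys is a proof a reader can check without consulting \cite{Feigenbaum85:CartProd} or \cite{Zmazek:02} at all.
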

\begin{proof}
  This was shown for the relation $\delta$ in \cite{Feigenbaum85:CartProd}
  and later for equivalence relations $Q$ with the unique square property
  in \cite{Zmazek:02}. Obviously, the result remains true when equivalence
  classes of $Q$ are united, i.e., for any USP-relation $R$ coarser
    than $Q$. The assertion follows immediately from the definition of
    USP-relations.
\end{proof}

Hence, if $G$ is connected and $R$ is a USP-relation, then
$N_{\varphi}(u) \neq \emptyset$ and $\No(u)\neq \emptyset$ for all $u\in
V(G)$ and all $\vp\in R$. Thus, neither $\G$ nor $\Go$ has isolated
vertices.

\begin{lemma}\label{lem:bijection}
  Let $R$ be a USP-relation on $E(G)$ and let $[u,v]\in \vp \sqsubseteq R$.
  Then $R$ induces a bijection between the $\psi$-edges incident to $u$ and
  $\psi$-edges incident to $v$ for every $\psi\in R$.  Furthermore, the
  vertices $u$ and $v$ have the same $\psi$-degree for every $\psi\in R$
  with $\psi\neq\vp$.
\end{lemma}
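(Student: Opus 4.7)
My plan is to exploit the finer equivalence relation $Q \subseteq R$ which, by the definition of a USP-relation, satisfies the unique square property. Let $\vp' \eqcl Q$ denote the $Q$-class containing the edge $[u,v]$; then $\vp' \subseteq \vp$.

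Fix an $R$-class $\psi \neq \vp$ and take any $\psi$-edge $[u,x]$ at $u$. This edge lies in a $Q$-class $\psi' \subseteq \psi$, and $\psi \neq \vp$ forces $\psi' \neq \vp'$. Hence the adjacent edges $[u,v]$ and $[u,x]$ belong to distinct $Q$-classes, and property (S1) applied to $Q$ yields a \emph{unique} chordless square on vertices $u,v,y,x$ with $[u,v],[x,y]\in\vp'$ and $[u,x],[v,y]\in\psi'$. I would then define a map $\Phi \colon N_{\psi}(u) \to N_{\psi}(v)$ by $x \mapsto y$, which is well defined by the uniqueness of the $Q$-square.

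To show $\Phi$ is a bijection, I would run the symmetric construction based at $v$ to define $\Phi' \colon N_{\psi}(v) \to N_{\psi}(u)$ and then verify that $\Phi$ and $\Phi'$ are mutual inverses. This step reduces again to the uniqueness clause of (S1) for $Q$: any chordless square containing the adjacent edges $[v,u]$ and $[v,y]$ whose opposite edges lie in the same $Q$-class must coincide with the original square, since the latter already witnesses such a configuration. The equality $|N_{\psi}(u)|=|N_{\psi}(v)|$ for $\psi\neq\vp$ follows at once. The case $\psi=\vp$ admits the same construction for any $\vp$-edge at $u$ lying outside of $\vp'$, while $[u,v]$ itself is matched with $[u,v]$; this also explains why the degree statement is only claimed for $\psi\neq\vp$, as $\vp$-edges at $u$ that happen to lie in $\vp'$ need not be matchable this way.

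The principal obstacle I anticipate is that Observation~\ref{obs:squares} only guarantees the \emph{existence} of an $R$-square and not its uniqueness, as Figure~\ref{fig:CounterUnion} illustrates. Consequently the bijection cannot be constructed directly from $R$-squares, and the descent to the finer relation $Q$ is essential. Once the $Q$-square is fixed, one only needs to observe that its $\psi'$-edge $[v,y]$ lies in $\psi$ because $\psi' \subseteq \psi$, so the assignment depends only on the underlying $R$-classes as required.
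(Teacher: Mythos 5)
Your proof is correct, and its skeleton---descending to a finer relation $Q\subseteq R$ with the unique square property, which is the only handle the definition of a USP-relation provides---is the same as the paper's. The difference lies in how the key step is settled. The paper does not argue from (S1) at all: it cites \cite{Zmazek:02}, where the bijection statement is already proved for relations having the unique square property, and then only has to observe that each $R$-class $\psi\neq\vp$ is a disjoint union of $Q$-classes $\psi'\subseteq\psi$, so the disjoint union of the cited per-$Q$-class bijections is a bijection between the $\psi$-edges at $u$ and at $v$. You instead re-derive the cited result from scratch: the unique $Q$-square spanned by $[u,v]\in\vp'$ and $[u,x]\in\psi'$ defines $\Phi(x)$, and the uniqueness clause of (S1), applied at $v$ to the pair $[v,u]$, $[v,y]$, shows that the symmetric map $\Phi'$ inverts $\Phi$. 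Since your square construction keeps $[v,y]$ in the same $Q$-class as $[u,x]$, your global map on $N_{\psi}(u)$ coincides with the paper's glued bijection; you simply build it uniformly rather than class by class. What your version buys is self-containedness and an explicit identification of where uniqueness---as opposed to the mere existence guaranteed by Observation~\ref{obs:squares}---is indispensable, namely for the well-definedness and invertibility of $\Phi$; what the paper's version buys is brevity by outsourcing exactly that argument to the literature. Your closing remark on $\psi=\vp$ is also consistent with the paper: its proof likewise constructs bijections only for classes distinct from $\vp$, which is why the degree claim is restricted to $\psi\neq\vp$.
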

\begin{proof}
  Again, the result was first proved for $\delta$ in
  \cite{Feigenbaum85:CartProd} and then for equivalence relations with
  the unique square property in \cite{Zmazek:02}. Now suppose $R$ is an
  USP-relation, i.e., there is an equivalence relation $Q\subseteq R$
  such that $Q$ has the unique square property.  Then each equivalence
  class $\chi\sqsubseteq R$ is the union of some $Q$-equivalence classes,
  $\chi=\bigcup_{\psi\subseteq\chi}\psi$. The result of \cite{Zmazek:02}
  guarantees the existence of a bijection of the $\psi$-edges incident to
  $u$ and the $\psi$-edges incident to $v$ for $[u,v]\in \vp\neq\chi$ for
  all $\psi\subset \chi$. Since $\psi\cap\psi'=\emptyset$ for any two 
  distinct classes $\psi, \psi'\sqsubseteq Q$ we conclude that the
  disjoint union of these bijections over the $\psi\subset \chi$ is a 
  bijection between the $\chi$-edges incident to $u$ and the $\chi$-edges
  incident to $v$ for any $[u,v]\in \vp\neq\chi$. Clearly, this bijection 
  is again induced by $R$. It follows immediately that the $\chi$-degrees
  of $u$ and $v$ are also the same for all $\chi\neq\vp$.
\end{proof}

The following result was proved in \cite{Imrich:94} assuming the square
property. The proof uses only the existence but not the uniqueness of these
squares. Thus, by Observation~\ref{obs:squares}, the result remains
  true for USP-relations:
\begin{lemma}\label{lem:nonempty_intersect}
  Let $R$ be a USP-relation on $E(G)$ that contains only two equivalence
  classes $\varphi,\ \overline\varphi$.  Then
  \begin{equation*}
    |V(\G^x)\cap V(\Go^y)|\geq 1
  \end{equation*}
  for all $x,y\in V(G)$.
\end{lemma}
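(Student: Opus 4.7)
The plan is to show that any walk from $x$ to $y$ can be transformed, by square-based swaps, into one in which all $\varphi$-edges occur before all $\overline\varphi$-edges. The vertex at which the edge-type switches will then lie in the desired intersection.

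Since $G$ is connected, pick any walk $W=(x=x_0,x_1,\ldots,x_k=y)$ and attach to it the measure $\mu(W):=\sum_{i\,:\,[x_{i-1},x_i]\in\varphi} i$, the sum of the positions of its $\varphi$-edges. Apply the following reduction iteratively: whenever $W$ contains consecutive edges with $[x_{i-1},x_i]\in\overline\varphi$ and $[x_i,x_{i+1}]\in\varphi$, invoke Observation \ref{obs:squares}. These two adjacent edges lie in distinct $R$-classes (there are only two), so they span a chordless square whose opposite edges share their class; let $w$ be its fourth vertex. Then $[x_{i-1},w]\in\varphi$ (opposite $[x_i,x_{i+1}]$) and $[w,x_{i+1}]\in\overline\varphi$ (opposite $[x_{i-1},x_i]$), so replacing $x_i$ by $w$ in $W$ yields a walk $W'$ from $x$ to $y$ of the same length with $\mu(W')=\mu(W)-1$. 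Since $\mu\ge 0$, the process terminates.

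The terminal walk $W^{*}=(x=z_0,z_1,\ldots,z_k=y)$ contains no $\overline\varphi$-edge immediately followed by a $\varphi$-edge, so there exists $m\in\{0,\ldots,k\}$ with $[z_{j-1},z_j]\in\varphi$ for $j\le m$ and $[z_{j-1},z_j]\in\overline\varphi$ for $j>m$. The prefix $(z_0,\ldots,z_m)$ is a walk in $G_\varphi$ from $x$, giving $z_m\in V(G^x_\varphi)$, while the suffix $(z_m,\ldots,z_k)$ uses only non-$\varphi$-edges and ends at $y$, giving $z_m\in V(G^y_{\overline\varphi})$ by Observation \ref{obs:not-vp-path}. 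Hence $z_m\in V(G^x_\varphi)\cap V(G^y_{\overline\varphi})$, as required. The degenerate cases $m=0$ and $m=k$ are immediate, since $x\in V(G^x_\varphi)$ and $y\in V(G^y_{\overline\varphi})$ trivially.

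The main obstacle is the swap step: one must verify that the square produced by Observation \ref{obs:squares} types its opposite edges correctly, so that the $\varphi$-edge truly moves one position to the left and $\mu$ strictly decreases. This is immediate from the wording ``opposite edges in the same equivalence class'', together with the hypothesis that $R$ has only two classes here; a minor subtlety is that the four vertices of the chordless square are distinct, which prevents any walk-theoretic collapse and guarantees that the edges of the new walk are exactly the two supplied by the square.
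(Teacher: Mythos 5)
Your proof is correct, and it is genuinely more self-contained than the paper's: the paper does not spell out an argument for Lemma~\ref{lem:nonempty_intersect} at all, but instead observes that the statement was proved in \cite{Imrich:94} under the stronger square property, that the proof there uses only the \emph{existence} (not the uniqueness) of the squares, and that Observation~\ref{obs:squares} therefore transfers it to USP-relations. What you supply is precisely the argument that this citation conceals. Starting from an arbitrary walk from $x$ to $y$, you use Observation~\ref{obs:squares} to replace a consecutive pair ($\overline\varphi$-edge, $\varphi$-edge) by ($\varphi$-edge, $\overline\varphi$-edge), and your monovariant $\mu$ (the sum of the positions of the $\varphi$-edges) guarantees termination in a walk whose $\varphi$-prefix and $\overline\varphi$-suffix meet at a vertex of $V(G_{\varphi}^x)\cap V(G_{\overline\varphi}^y)$. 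The details check out: the typing of the opposite edges of the spanned square is exactly as you state (the edge opposite a $\varphi$-edge is again a $\varphi$-edge, and likewise for $\overline\varphi$, since $R$ has only these two classes); the two consecutive edges are necessarily distinct edges because they lie in different classes, so they are genuinely adjacent and no backtracking degeneracy ($x_{i-1}=x_{i+1}$) can occur; and a chordless square has four distinct vertices, so the replacement yields a bona fide walk of the same length. Your argument also makes the paper's parenthetical remark transparent: nowhere do you need uniqueness of the square, only its existence, which is exactly why the lemma survives the passage from the square property to USP-relations.
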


If $R$ is a convex USP-relation, i.e., if $R$ is a product relation,
then $|V(\G^x)\cap V(\Go^y)|= 1$ for all $x,y\in V$ and all $\varphi\eqcl
R$ \cite{Imrich:94}. In Theorem~\ref{thm:prodrel} below we will show
that the converse is also true.

We will need also the following technical results:
\begin{lemma} \label{lem:subgraph} 
  Let $R$ be a USP-relation on the edge set $E(G)$ of a connected graph
  $G$.  Let $Q\subseteq R$ be an equivalence relation on $E(G)$ with unique
  square property.  For $[v,w]\in\chi\sqsubseteq Q$ and $\vp\sqsubseteq R$
  with $\vp\cap\chi=\emptyset$ let $H$ be the subgraph of $G$ with vertex
  set $V(\G^v)\cup V(\G^w)$ that contains only $\vp$-edges and
  $\chi$-edges, respectively, that is,
  \begin{equation*}
    E(H)=\{[x,y]\in\vp\mid x,y\in
    V(\G^v)\cup V(\G^w)\}\cup\{[x,y]\in\chi\mid x,y\in V(\G^v)\cup
    V(\G^w)\}\,.
  \end{equation*}  
  Then $Q$ restricted to $H$ has the unique square property on $H$.
\end{lemma}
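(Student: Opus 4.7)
The plan is to verify condition (S1) of the unique square property for the restriction $Q|_H$ directly, by transporting the unique squares guaranteed by USP of $Q$ on $G$ into $H$. Let $e,f\in E(H)$ be two adjacent edges in distinct $(Q|_H)$-classes, so they lie in distinct $Q$-classes of $G$. Because $E(H)\subseteq\vp\cup\chi$ and $\chi$ is itself a single $Q$-class, two situations can arise: (a) both $e,f$ are $\vp$-edges (in distinct $Q$-subclasses of $\vp$), or (b) one of $e,f$ is a $\vp$-edge and the other is a $\chi$-edge. In either case, writing $e=[a,b]$ and $f=[a,d]$, USP of $Q$ on $G$ produces a unique chordless square $S=(a,b,c,d)$ in $G$ whose opposite edges share a $Q$-class.

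I first show $S\subseteq H$. In case (a), each opposite pair of $S$ lies in a $Q$-subclass of $\vp$, so every edge of $S$ is a $\vp$-edge and all four vertices belong to a common $\vp$-connected component of $G$. Since $a,b,d\in V(\G^v)\cup V(\G^w)$, that component must be $\G^v$ or $\G^w$, forcing $c\in V(H)$. In case (b), $S$ carries two $\vp$-edges and two $\chi$-edges alternating around the cycle; the $\vp$-edge of $S$ incident to $c$ places $c$ in the same $\vp$-component as either $b$ or $d$, both of which lie in $V(H)$, so $c\in V(H)$ again. Hence every edge of $S$ lies in $\vp\cup\chi$ and has both endpoints in $V(H)$, giving $E(S)\subseteq E(H)$; chordlessness of $S$ in $H$ is automatic from chordlessness in $G$ since $E(H)\subseteq E(G)$.

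For uniqueness, suppose $S'=(a,b,c',d)$ is another chordless square of $H$ spanned by $e,f$ with opposite edges in the same $Q|_H$-class. The opposite-edge condition forces $[b,c']$ to lie in the $Q$-class of $f$, hence in $\vp\cup\chi\subseteq E(G)$. If $S'$ were chordless in $G$, USP of $Q$ on $G$ would force $S'=S$, so $S'$ must carry a chord in $G$; because $[b,d]\notin E(G)$ (it would be a chord of the chordless $S$), the chord must be $[a,c']\in E(G)$. But $a,c'\in V(H)$ and $S'$ is chordless in $H$, so $[a,c']\notin E(H)$, whence $[a,c']\notin\vp\cup\chi$; in particular $[a,c']$ lies in a $Q$-class different from that of $[a,b]$. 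Applying USP of $Q$ on $G$ to the adjacent edges $[a,b]$ and $[a,c']$ would guarantee a chordless $4$-cycle of the form $(a,b,x,c')$; however the edge $[b,c']\in E(G)$ is a chord of every such cycle, so no chordless $4$-cycle spans $[a,b]$ and $[a,c']$. This contradicts the existence clause of USP, so $S'=S$.

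The main obstacle is precisely this uniqueness argument: passing from $G$ to $H$ may destroy chords, so a chordless $4$-cycle in $H$ need not be chordless in $G$, and uniqueness does not reduce mechanically to USP of $Q$ on $G$. The remedy is to invoke USP of $Q$ a second time, on the pair $[a,b]$ and $[a,c']$, and extract a contradiction from the surviving chord $[b,c']\in\vp\cup\chi$. The existence half, by contrast, is a routine case split on whether each of $e,f$ is a $\vp$- or $\chi$-edge.
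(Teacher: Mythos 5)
Your proof is correct, and it is in fact more complete than the paper's own argument. The existence half of your proof coincides with the paper's entire proof: the paper only shows that the fourth vertex $u$ of the unique $G$-square spanned by $e$ and $f$ lies in $V(H)$ (w.l.o.g.\ $e$ lies in a $Q$-class $\alpha\subseteq\varphi$, so the edge opposite $e$ is again in $\alpha\subseteq\varphi$, which places $u$ in the $\varphi$-component of a vertex of $H$ and hence in $V(G_{\varphi}^v)\cup V(G_{\varphi}^w)$), and then declares this sufficient. That declaration silently assumes that every chordless square of $H$ spanned by $e,f$ with opposite edges in the same class is also chordless in $G$, so that uniqueness transfers from $G$ down to $H$. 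As you observe, this is not automatic: passing to the subgraph $H$ can destroy chords, namely $G$-edges in $E(G)\setminus(\varphi\cup\chi)$ joining vertices of $H$. Your second application of the unique square property --- to the adjacent pair $[a,b]$, $[a,c']$, where the putative $G$-chord $[a,c']$ necessarily lies outside $\varphi\cup\chi$ and hence in a $Q$-class different from that of $[a,b]$, and where every candidate square spanned by this pair is spoiled by the surviving chord $[b,c']$ --- correctly rules out such a square $S'$, so uniqueness in $H$ does hold. In short: your existence argument is the paper's argument, and your uniqueness argument closes a genuine gap that the paper's ``it suffices to show'' glosses over; the paper's version buys brevity, yours buys rigor.
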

\begin{proof}
  It suffices to show that for any two adjacent edges $e=[x,y],f=[x,z]\in E(H)$
  with $(e,f)\notin Q$ the vertex $u$ of the unique square $(x,y,u,z)$
  with opposite edges in the same equivalence class
  spanned by $e$ and $f$ in $G$ is already contained in $V(H)$. 
  
  W.l.o.g.\ let $x\in V(\G^v)$ and $e\in\alpha\subseteq\varphi$, 
  $\alpha\sqsubseteq Q$. 
  Hence, $[u,z]\in\alpha\subseteq\vp$ and therefore 
  $u\in V(\G^z)\subseteq V(\G^v)\cup V(\G^w)=V(H)$.
\end{proof}

\section{Results}

\subsection{Equitable Partitions}

\begin{lemma} \label{lem:neighbors} Let $G$ be a graph and let
  $\vp\neq\psi$ be two equivalence classes of a USP-relation $R$ and let
  $v,w\in V(G)$. Then all vertices of $\G^v$ have the same number of
  incident $\psi$-edges connecting $\G^v$ and $\G^w$.  More formally,
  \begin{equation*} 
    |\Npsi(v)\cap V(\G^w)|=|\Npsi(x)\cap V(\G^w)|
  \end{equation*}
  holds for all $x\in V(\G^v)$. 
\end{lemma}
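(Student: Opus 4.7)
The plan is to reduce the claim to the case where $v$ and $x$ are joined by a single $\vp$-edge and then to construct an explicit bijection using (S1) in a refinement of $R$. Since $x\in V(\G^v)$ by hypothesis, there is a $\vp$-path $v=x_0,x_1,\dots,x_k=x$ in $\G^v$, and a straightforward induction on $k$ reduces the claim to showing that
\begin{equation*}
  |\Npsi(v)\cap V(\G^w)|=|\Npsi(x)\cap V(\G^w)|
\end{equation*}
whenever $[v,x]\in\vp$.

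For this adjacent case I would fix a refinement $Q\subseteq R$ with the unique square property, and let $\chi'\eqcl Q$ denote the $Q$-class of $[v,x]$. For every $\psi$-neighbor $y$ of $v$, let $\chi\eqcl Q$ be the $Q$-class of $[v,y]$; since $\vp\cap\psi=\emptyset$, we have $\chi\neq\chi'$, so (S1) applied in $Q$ yields a unique chordless square $(v,x,y',y)$ with $[x,y']\in\chi\subseteq\psi$ and, crucially, $[y,y']\in\chi'\subseteq\vp$. The assignment $y\mapsto y'$ is therefore a well-defined map from $\Npsi(v)$ to $\Npsi(x)$; running the same argument with the roles of $v$ and $x$ interchanged supplies its inverse, so the map is a bijection (which is of course precisely the bijection produced by Lemma~\ref{lem:bijection}, but here the square-based construction is what matters).

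The decisive observation is that $[y,y']\in\vp$ forces $V(\G^y)=V(\G^{y'})$, so $y\in V(\G^w)$ if and only if $y'\in V(\G^w)$. Hence the bijection $y\mapsto y'$ restricts to a bijection between $\Npsi(v)\cap V(\G^w)$ and $\Npsi(x)\cap V(\G^w)$, settling the adjacent case and, via the induction, the full lemma.

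The main obstacle is not a hard calculation but rather keeping straight two parallel equivalence relations ($R$ and its USP-refinement $Q$) and three classes ($\vp,\psi,\chi'$), and extracting from (S1) in $Q$ the extra structural information that the second pair of opposite edges of the square lies in the same $R$-class $\vp$ as $[v,x]$. This is what allows $\vp$-component membership to be transported across the square from $y$ to $y'$; once that transport is in hand, the equality of cardinalities is immediate.
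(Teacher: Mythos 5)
Your proof is correct, but it takes a genuinely different route from the paper's. The paper treats Lemma~\ref{lem:bijection} as a black box, and since that lemma says nothing about \emph{where} the partner $\psi$-edges land, the paper has to compensate: it first settles the degenerate case $\Npsi(v)\cap V(\G^w)=\emptyset$ by a walk construction based on Observation~\ref{obs:squares}, then decomposes $\psi$ into $Q$-classes $\chi$, passes via Lemma~\ref{lem:subgraph} to the subgraph $H$ on $V(\G^v)\cup V(\G^w)$ (so that $Q$ restricted to $H$ retains the unique square property), and finally applies Lemma~\ref{lem:bijection} inside $H$, with a case distinction between $\G^v=\G^w$ and $V(\G^v)\cap V(\G^w)=\emptyset$ and a subtraction argument to isolate the edges ending in $V(\G^w)$. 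You instead open the black box: you re-run the (S1)-square construction that underlies Lemma~\ref{lem:bijection} and extract the one piece of information its statement omits, namely that the fourth edge $[y,y']$ of each square lies in the same $Q$-class $\chi'\subseteq\vp$ as $[v,x]$, so the bijection transports $\vp$-component membership ($\G^y=\G^{y'}$) and hence restricts to a bijection of $\Npsi(v)\cap V(\G^w)$ onto $\Npsi(x)\cap V(\G^w)$. This makes the subgraph lemma, the emptiness pre-analysis, and the case split all unnecessary, at the cost of not reusing the stated lemmas verbatim. Your mutual-inverse argument is also sound: by the uniqueness in (S1), the chordless square $(v,x,y',y)$ is simultaneously the unique same-class square spanned by $[v,x],[v,y]$ and the unique one spanned by $[x,v],[x,y']$, so the two maps invert each other. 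Both proofs share the final induction along a $\vp$-path in $\G^v$.
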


\begin{proof}
  First, we show that $\Npsi(v)\cap V(\G^w)=\emptyset$
  if and only if
  $\Npsi(x)\cap V(\G^w)=\emptyset$ holds for all $x\in V(\G^v)$.
 
  W.l.o.g., let $[v,w]\in\psi$ and consider an arbitrary vertex $x\in
  V(\G^v)$. Then there is a path $P:=(v=v_0,v_1,\ldots v_k=x)$ from $v$ to
  $x$ in $\G^v$. Recalling Observation~\ref{obs:squares}, we can construct
  a walk $Q=(w=w_0,w_1,\ldots,w_k)$ such that $[v_i,w_i]\in\psi$ for all
  $0\leq i\leq k$ and $[w_i,w_{i+1}]\in\vp$ for all $0\leq i\leq k-1$. Then
  $w_k\in N_{\psi}(x)$ and $w_k\in V(\G^w)$ and therefore, $\Npsi(x)\cap
  V(\G^w)\neq\emptyset$. Since $x\in V(\G^v)$ was arbitrarily chosen, we
  can conclude that $\Npsi(x)\cap V(\G^w)\neq\emptyset$ holds for all $x\in
  V(\G^v)$. Conversely, if $\Npsi(x)\cap V(\G^w)\neq\emptyset$ holds for
  all $x\in V(\G^v)$, this is trivially fulfilled also for $x=v$. Thus, we
  have $\Npsi(v)\cap V(\G^w)=\emptyset$ if and only if $\Npsi(x)\cap
  V(\G^w)=\emptyset$ holds for all $x\in V(\G^v)$.

  Now Suppose that $\Npsi(v)\cap V(\G^w)\neq\emptyset$. W.l.o.g., let
  $[v,w]\in\psi$. Since $R$ is a USP-relation, there is some relation
  $Q\subseteq R$ that has the unique square property, and $\psi\sqsubseteq
  R$ is the disjoint union of some equivalence classes $\chi\sqsubseteq Q$,
  $\psi=\bigcup_{\chi\subseteq\psi}\chi$. Thus, we have
  \begin{equation*}
    |\Npsi(x)\cap V(\G^w)|=\sum_{\chi\subseteq\psi}|N_\chi(x)\cap V(\G^w)|
  \end{equation*} 
  Therefore, it suffices to show that 
  $|N_\chi(v)\cap V(\G^w)|=|N_\chi(x)\cap V(\G^w)|$ holds for all 
  $x\in V(\G^v)$ and all $\chi\sqsubseteq Q$ with $\chi\subseteq\psi$.  
  In the following we denote with $N_{\mid H}$ the intersection of some 
  set $N\subseteq V(G)$ and the vertex set of a given subgraph 
  $H\subseteq G$.

  Suppose first, $x\in\N(v)$, i.e., $[v,x]\in\vp\neq\psi$.  By
  construction, $\vp\cap\chi=\emptyset$ holds for all $\chi\subseteq\psi$.
  Using the same arguments as before, we can conclude that $N_{\chi}(x)\cap
  V(G^w)=\emptyset$ if and only if $N_{\chi}(v)\cap V(G^w)=\emptyset$.
  Therefore, assume $N_{\chi}(v)\cap V(G^w)\neq\emptyset$.  Let $H$ be the
  subgraph of $G$ with vertex set $V(\G^v)\cup V(\G^w)$ defined as in
  Lemma~\ref{lem:subgraph}.  Then the restriction of $Q$ to $H$ satisfies
  the unique square property on $H$.
 
  If $\G^v=\G^w$, we can conclude by Lemma~\ref{lem:bijection} that
  \begin{equation} \label{eq:notind}
    |N_\chi(v)\cap V(\G^w)|=|N_\chi(v)_{\mid H}|=|N_\chi(x)_{\mid H}|=
    |N_\chi(x)\cap V(\G^w)|.
  \end{equation}
  
  Assume now $\G^v\neq\G^w$, and hence $V(\G^v)\cap V(\G^w)=\emptyset$.
  Thus $|N_{\chi}(y)_{\mid H}|=|N_\chi(y)\cap V(\G^v)|+|N_\chi(y)\cap V(\G^w)|$
  holds for all $y\in V(\G^v)$ and therefore, we can conclude again from 
  Lemma~\ref{lem:bijection} and Equation~\eqref{eq:notind}
  \begin{equation*}
    |N_\chi(x)\cap V(\G^w)|=|N_\chi(x)_{\mid H}|-|N_\chi(x)\cap V(\G^v)|
    =
    |N_\chi(v)_{\mid H}|-|N_\chi(v)\cap V(\G^v)|=|N_\chi(v)\cap V(\G^w)|\,,
  \end{equation*}
  which implies $|\Npsi(x)\cap V(\G^w)|=|\Npsi(v)\cap V(\G^w)|$.

  If $v$ and $x$ are connected by a path in $\G^v$, the assertion follows
  by induction on the length of the path.
\end{proof}

\begin{corollary}
  Let $G$ be a connected graph and $R$ be a USP-relation on $E(G)$.  Then
  $P^R_{\vpo}$ is an equitable partition of the graph $\G$ for every
  equivalence class $\vp$ of $R$.
\end{corollary}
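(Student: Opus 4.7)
The plan is to reduce the claim to Lemma~\ref{lem:neighbors} via a simple coarsening trick. Specifically, I would introduce the equivalence relation $R'$ on $E(G)$ whose two classes are $\vp$ and $\vpo=E(G)\setminus\vp$. Since $R$ refines $R'$ and $R$ is a USP-relation, any refinement $Q\subseteq R$ with the unique square property also satisfies $Q\subseteq R'$, so $R'$ is itself a USP-relation. Moreover, the subgraph $\Go$ and its connected components $\Go^x$ depend only on the edge set $\vpo$ and are therefore identical whether one works with $R$ or with $R'$; in particular $\mathcal{P}^R_{\vpo}=\mathcal{P}^{R'}_{\vpo}$.

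Next I would apply Lemma~\ref{lem:neighbors} to the USP-relation $R'$, with the roles of its two classes \emph{interchanged}: $\vpo$ plays the role of the class ``$\vp$'' in the statement of the lemma, and $\vp$ plays the role of ``$\psi$''. Whenever these two classes are distinct (i.e., $\vp\neq E(G)$), the hypothesis of the lemma is satisfied and it directly yields
\[
  |\N(x)\cap V(\Go^w)| \;=\; |\N(v)\cap V(\Go^w)|
\]
for every $v,w\in V(G)$ and every $x\in V(\Go^v)$. This is precisely the equitable-partition condition for $\mathcal{P}^R_{\vpo}$ on the graph $\G$: taking $A=V(\Go^v)$ and $B=V(\Go^w)$, every vertex of $A$ has the same number $m_{AB}$ of $\vp$-neighbours in $B$, and the loop case $A=B$ is covered by choosing $v=w$.

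The degenerate case $\vp=E(G)$ is trivial: then $\vpo=\emptyset$, the graph $\Go$ has no edges, and $\mathcal{P}^R_{\vpo}$ is the partition of $V(G)$ into singletons, which is equitable in any graph. I do not anticipate a genuine obstacle: the USP-property of $R'$ is immediate from the definition of USP-relations, and Lemma~\ref{lem:neighbors} already performs all of the combinatorial work. The only conceptual point worth highlighting is that the asymmetric roles of $\vp$ and $\psi$ in Lemma~\ref{lem:neighbors}---one class supplying the components, the other supplying the counted edges---can be reversed through this coarsening, so that one counts $\vp$-neighbours \emph{between} $\vpo$-components rather than $\psi$-neighbours between $\vp$-components.
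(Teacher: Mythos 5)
Your proof is correct and takes essentially the same route as the paper, whose own proof likewise derives the corollary directly from Lemma~\ref{lem:neighbors} together with the fact that $\mathcal{P}^R_{\vpo}$ partitions $V(G)=V(\G)$. Your coarsening to the two-class relation $R'=\{\vp,\vpo\}$ simply makes explicit how that lemma applies when $\vpo$ is not itself an $R$-class (a detail the paper leaves implicit), and it is fully in the spirit of the paper's observation that any coarsening of a USP-relation is again a USP-relation.
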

\begin{proof} This follows immediately from $V(G)=V(\G)$, the fact that
  $P^R_{\vpo}$ defined in Eq.(\ref{eq:Povp}) is a partition of $V(G)$, and
  Lemma~\ref{lem:neighbors}.
\end{proof}

If $\G$ is an induced subgraph of $G$ we have $\G/\mathcal{P}^R_{\vpo}
\cong \mathcal{N}(G/\mathcal{P}^R_{\vpo})$ which follows from the fact
that $[\Go^x,\Go^y]$ is an edge in $\mc N(G/\mc P^R_{\vpo})$ if and only if
there is an edge in $G$ connecting a vertex in $V(\Go^x)$ with a vertex in
$V(\Go^y)$. This edge must be in $\vp$, since otherwise $\Go^x=\Go^y$, and
hence it is in $\G$.

\begin{remark}
  The quotient graphs $B_{\vpo}:=\G/\mc P_{\vpo}^R$ provide a direct
  connection to the theory of graph bundles since $B_{\vpo}$ coincides with
  the base graph of the bundle presentation $(G,p_{\vpo},B_{\vpo})$ of $G$
  provided $\vpo$ is $2$-convex \cite{Imrich:97,PZZ-01}. We recall that a
  subgraph $H\subseteq G$ is $2$-convex w.r.t. $G$ if all shortest $G$-paths of length
  $\leq 2$ connecting pairs of vertices in $H$ are contained in $H$.  An
  equivalence class $\vp \eqcl R$ is said to be $2$-convex if all connected
  components $\G^x$ of $\G$ are $2$-convex w.r.t. $G$. Moreover, it can easily be shown
  that $G$ has a graph bundle presentation $(G,p,\G/\mc P_{\vpo}^R)$ over a
  simple base if and only if $p:\G \rightarrow \G/\mc P_{\vpo}^R$ is a
  covering projection, i.e., a locally bijective homomorphism
  \cite{FPT:08,Imrich:97,PZZ-01}.
\end{remark}

\begin{theorem} \label{thm:equitpart} Let $R$ be a USP-relation on the edge
  set $E(G)$ of a connected graph $G$.  Then $\mc P^R$ defined in
  Eq.(\ref{eq:mcPR}) is an equitable partition of $G$.
\end{theorem}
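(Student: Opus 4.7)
The plan is to fix two classes $A, B \in \mc{P}^R$, written as $A = V_R(a)$ and $B = V_R(b)$, and to show that $|N_G(x) \cap B|$ does not depend on the choice of $x \in A$. My approach would be to decompose the edges at $x$ according to their $R$-class,
\begin{equation*}
|N_G(x) \cap B| = \sum_{\vp \eqcl R} |N_\vp(x) \cap B|,
\end{equation*}
and to argue that every summand depends only on $A$, $B$, and $\vp$.

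For each $R$-class $\vp$, the key observation is that if $z \in N_\vp(x)$ and $\theta \eqcl R$ satisfies $\theta \neq \vp$, then $[x,z] \in \vp \subseteq \overline{\theta}$, so $z \in V(G_{\overline{\theta}}^x) = V(G_{\overline{\theta}}^a)$ (the second equality using $x \in V_R(a)$). Thus $z \in B = V_R(b)$ is equivalent to the conjunction of $V(G_{\overline{\theta}}^a) = V(G_{\overline{\theta}}^b)$ for every $\theta \neq \vp$ together with $z \in V(G_{\overline{\vp}}^b)$. Writing $P(A,B,\vp)$ for the first condition, which depends only on $A$, $B$, and $\vp$, the summand is either $0$ (if $P(A,B,\vp)$ fails) or $|N_\vp(x) \cap V(G_{\overline{\vp}}^b)|$ (if $P(A,B,\vp)$ holds).

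To handle the remaining count I would invoke the preceding corollary, which asserts that $\mc{P}^R_{\vpo}$ is an equitable partition of $\G$. Since $V(G_{\overline{\vp}}^a)$ and $V(G_{\overline{\vp}}^b)$ are classes of $\mc{P}^R_{\vpo}$, since $A \subseteq V_R(a) \subseteq V(G_{\overline{\vp}}^a)$, and since the neighbors of a vertex in $\G$ are precisely its $\vp$-neighbors in $G$, equitability implies that $|N_\vp(x) \cap V(G_{\overline{\vp}}^b)|$ is constant as $x$ ranges over $V(G_{\overline{\vp}}^a)$, and in particular over $A$. Summing these constants over $\vp \eqcl R$ then displays $|N_G(x) \cap B|$ as a quantity depending only on $A$ and $B$. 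The step most in need of care is the splitting carried out in the second paragraph, which rests on the elementary but crucial fact that a $\vp$-edge cannot cross any $\overline{\theta}$-boundary for $\theta \neq \vp$; once this is in place, the application of the corollary in the final step is routine.
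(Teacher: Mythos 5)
Your proposal is correct and follows essentially the same route as the paper: the class-wise decomposition $|N_G(x)\cap B|=\sum_{\vp\eqcl R}|\N(x)\cap B|$ is the paper's opening step, your key observation that a $\vp$-edge stays inside a single component of $G_{\overline{\theta}}$ for every $\theta\neq\vp$ (so that $\N(x)\cap B$ is empty unless $V(G_{\overline{\theta}}^a)=V(G_{\overline{\theta}}^b)$ for all $\theta\neq\vp$, in which case it equals $\N(x)\cap V(\Go^b)$) is exactly the content of Lemma~\ref{lem:neighborhood-cut}, and your appeal to the corollary that $\mc P^R_{\vpo}$ is an equitable partition of $\G$ is the paper's appeal to Lemma~\ref{lem:neighbors}. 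The only difference is organizational, and arguably tidier on your side: your dichotomy condition $P(A,B,\vp)$ is manifestly independent of $x$, so you bypass the nonemptiness-propagation statement in part (1) of Lemma~\ref{lem:neighborhood-cut}, which the paper establishes with an additional application of Lemma~\ref{lem:neighbors}.
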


To prove the Theorem, we first show the following:

\begin{lemma} \label{lem:neighborhood-cut} Let $G$ be a connected graph and
  $R$ be a USP-relation on $E(G)$.  Then for an arbitrary equivalence class
  $\vp$ of $R$ holds:
  \begin{itemize}
   \item[(1)] $\N(x)\cap \inters(y)\neq\emptyset$ if and only if 
     $\N(u)\cap \inters(y)\neq\emptyset$ for all $u\in \inters(x)$.
   \item[(2)] $\N(x)\cap \inters(y)\neq\emptyset$ implies 
     $\N(x)\cap \inters(y)=\N(x)\cap V(\Go^y)$.
  \end{itemize}
\end{lemma}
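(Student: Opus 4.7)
The plan is to prove part (2) first, since it has a short direct argument, and then to attack (1) via a parallel-path lift that reuses the idea behind (2).

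\emph{Strategy for (2).} I would fix an arbitrary $z\in\N(x)\cap V(\Go^y)$ and verify $z\in V_R(y)$ coordinate-wise, i.e., show $z\in V(G_{\overline{\eta}}^y)$ for every class $\eta\eqcl R$. For $\eta=\varphi$ this is precisely the hypothesis on $z$. For $\eta\neq\varphi$, the two-edge walk $z\!-\!x\!-\!x'$ consists entirely of $\varphi$-edges and therefore avoids $\eta$; concatenating it with a non-$\eta$ walk from $x'$ to $y$---which exists because $x'\in V_R(y)\subseteq V(G_{\overline{\eta}}^y)$---produces the required non-$\eta$ walk from $z$ to $y$.

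\emph{Strategy for (1).} The direction $\Leftarrow$ is immediate by taking $u=x\in V_R(x)$. For $\Rightarrow$, I fix $x'\in\N(x)\cap V_R(y)$ and $u\in V_R(x)$ and construct $u'\in\N(u)\cap V_R(y)$ by a parallel-path lift. Since $u\in V(\Go^x)$, I pick a non-$\varphi$ path $x=x_0,x_1,\dots,x_k=u$ and set $x_0':=x'$. Iterating Observation~\ref{obs:squares}: at step $i$ the adjacent edges $[x_i,x_i']\in\varphi$ and $[x_i,x_{i+1}]\in\chi_i\neq\varphi$ lie in distinct $R$-classes, hence span a chordless square whose opposite edges belong to the same classes; this yields $x_{i+1}'$ with $[x_{i+1},x_{i+1}']\in\varphi$ and $[x_i',x_{i+1}']\in\chi_i$. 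Setting $u':=x_k'$ gives $u'\in\N(u)$, and since the lifted walk $x'=x_0',\dots,x_k'=u'$ avoids $\varphi$, I get $u'\in V(\Go^{x'})=V(\Go^y)$, which settles the $\eta=\varphi$ coordinate of $u'\in V_R(y)$.

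\emph{Main obstacle.} The hard part is showing $u'\in V(G_{\overline{\psi}}^y)$ for every $\psi\neq\varphi$: the lifted walk above may well traverse $\psi$-edges, and in general no single path from $x$ to $u$ avoids $\varphi$ and $\psi$ simultaneously, so a direct repetition fails. My workaround is to produce a (possibly different) $\varphi$-neighbor $u''$ of $u$ lying in $V(G_{\overline{\psi}}^{x'})$, and then bridge $u'$ to $u''$ by the length-two $\varphi$-walk $u'\!-\!u\!-\!u''$; since $\psi\neq\varphi$, this walk avoids $\psi$, forcing $u'$ and $u''$ into the same $\overline{\psi}$-component and thereby giving $u'\in V(G_{\overline{\psi}}^{x'})=V(G_{\overline{\psi}}^y)$. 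To build $u''$, I pick a non-$\psi$ path $x=y_0,\dots,y_\ell=u$ in $G_{\overline{\psi}}^x$ and lift it to a walk $y_0'=x',\dots,y_\ell'=:u''$: for edges $[y_i,y_{i+1}]\notin\varphi$ the same square construction produces a lifted edge in the class of $[y_i,y_{i+1}]$, which is not $\psi$; for $[y_i,y_{i+1}]\in\varphi$, where Observation~\ref{obs:squares} does not apply, I set $y_{i+1}':=y_i$, which preserves the parallel invariant $[y_{i+1},y_{i+1}']\in\varphi$ (this edge is just the original $[y_{i+1},y_i]$) and makes the lifted step $[y_i',y_{i+1}']=[y_i',y_i]\in\varphi$ avoid $\psi$ automatically. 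Every lifted edge then avoids $\psi$, so $u''\in V(G_{\overline{\psi}}^{x'})$, and the bridge completes the argument.
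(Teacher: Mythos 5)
Your proof is correct, but it takes a more self-contained and considerably more laborious route than the paper's. Part (2) is essentially the paper's own argument: both transfer membership in $V(G_{\overline{\eta}}^y)$ from $x'$ to $z$ along the two $\varphi$-edges $[z,x]$ and $[x,x']$, which avoid every $\eta\neq\varphi$. In part (1), however, the paper never performs a square-lift directly: it first notes that the edge $[x,x']\in\varphi$ avoids each $\psi\neq\varphi$, so $x'\in V_R(y)$ forces $V(G_{\overline{\psi}}^x)=V(G_{\overline{\psi}}^y)$ for all $\psi\neq\varphi$; it then invokes the already established Lemma~\ref{lem:neighbors} (applied to the classes $\varphi$ and $\overline{\varphi}$) to obtain some $w\in N_{\varphi}(u)\cap V(G_{\overline{\varphi}}^y)$, and concludes $w\in V_R(y)$ at once because $[u,w]\in\varphi$ also avoids every $\psi\neq\varphi$. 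Your Step 1 lift along a non-$\varphi$ path merely re-proves the instance of Lemma~\ref{lem:neighbors} that the paper cites, and your ``main obstacle'' is self-inflicted: the one-edge principle you use in the bridge (a $\varphi$-edge cannot leave an $\overline{\psi}$-component when $\psi\neq\varphi$) already finishes the proof without $u''$, by chaining $u'\in V(G_{\overline{\psi}}^u)$ (edge $[u,u']\in\varphi$), $u\in V(G_{\overline{\psi}}^x)$ (since $u\in V_R(x)$), and $V(G_{\overline{\psi}}^x)=V(G_{\overline{\psi}}^{x'})=V(G_{\overline{\psi}}^y)$ (edge $[x,x']\in\varphi$ and $x'\in V_R(y)$). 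Your second lift, with its case distinction (square-lift for non-$\varphi$ edges, backstep $y_{i+1}':=y_i$ for $\varphi$-edges), is nevertheless sound---the invariant $[y_i,y_i']\in\varphi$ survives both cases and every lifted edge avoids $\psi$---so the redundancy costs only economy, not correctness. In short: the paper's route buys brevity by reusing Lemma~\ref{lem:neighbors}; yours buys independence from that lemma at the price of duplicating its square-lifting idea twice.
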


\begin{proof}
  \begin{itemize}
  \item[(1)]
  Let $\N(x)\cap \inters(y)\neq\emptyset$ and hence, $\N(x)\cap
  V(\Gopsi^y)\neq \emptyset$ for all $\psi \eqcl R$. Thus, there exists a
  vertex $z\in V(G)$ with $[x,z]\in\vp$ such that $z\in V(\Gopsi^y)$ for
  all $\psi \eqcl R$. Note, it holds $z\in V(\Gopsi^x)$ since for all
  $\vp\neq \psi$ there is a path that is not in $\psi$ which is the
  particular edge $[x,z]\in\vp$ . Therefore, $\Gopsi^x=\Gopsi^y$ for all
  $\psi\neq \vp$.
  
  Now let $u\in \inters(x)$. Hence $u\in V(\Gopsi^x)=V(\Gopsi^y)$ for all
  $\psi\neq \vp$. From Lemma~\ref{lem:neighbors} and the fact that
  $\N(x)\cap V(\Go^y)\neq \emptyset$, we can conclude that $\N(u)\cap
  V(\Go^y)\neq \emptyset$, i.e., there exists a vertex $w\in V(\Go^y)$
  such that $[u,w]\in\varphi$. This implies $w\in
  V(\Gopsi^u)=V(\Gopsi^y)$ for all $\psi\neq \vp$ and therefore $w\in
  \inters(y)$, hence $\N(u)\cap \inters(y)\neq\emptyset$. Conversely, if
  $\N(u)\cap \inters(y)\neq\emptyset$ for all $u\in \inters(x)$, this is
  trivially fulfilled for $u = x$.
  \item[(2)]
  Let $z\in\N(x)\cap \inters(y)$, that is,
  $z\in\N(x)$ and $z\in V(\Gopsi^y)$ for all $\psi \eqcl R$, in
  particular, $z\in V(\Go^y)$. Hence, $z\in\N(x)\cap V(\Go^y)$ and
  therefore we have $\N(x)\cap \inters(y)\subseteq \N(x)\cap V(\Go^y)$.
  Now, let $z\in \N(x)\cap V(\Go^y)$, which is equivalent to
  $[x,z]\in\varphi$ and $z\in V(\Go^y)$. It follows $z\in V(\Gopsi^x)$
  for all $\psi\neq \vp$ and thus $z\in V(\Gopsi^y)$ for all $\psi \eqcl
  R$ since $\N(x)\cap \inters(y)\neq\emptyset$. Hence, $z\in \N(x)\cap
  \inters(y)$ and therefore $\N(x)\cap V(\Go^y)\subseteq\N(x)\cap
  \inters(y)$, from which we can conclude equality of the sets.
  \end{itemize}
\end{proof}

\begin{proof}[Proof of Theorem~\ref{thm:equitpart}]
  By construction $\mc P^R$ is a partition of $V(G)$.
  It remains to show that this partition is equitable, that is, we have to
  show that for arbitrary $u,x,y\in V(G)$ with $u\in \inters(x)$ holds
  \begin{equation}	\label{eq:equitable}
    |N_G(u)\cap \inters(y)|=|N_G(x)\cap \inters(y)|.
  \end{equation}
  Notice, that for arbitrary $x\in V(G)$ holds $N_G(x)=\bigcup_{\vp \eqcl
    R} \N(x)$ and $\N(x)\cap \Npsi(x)=\emptyset$ for $\vp\neq \psi$. Hence we
  have $|N_G(x)\cap \inters(y)|=\sum_{\vp \eqcl R} |\N(x)\cap \inters(y)|$.
  Therefore, it suffices to show
  \begin{equation*}
    |\N(u)\cap \inters(y)|=|\N(x)\cap \inters(y)|\quad\forall \vp\eqcl R
  \end{equation*}
  to prove Eq.~\eqref{eq:equitable}. This equality, however, follows
  immediately from Lemma~\ref{lem:neighborhood-cut} together with
  Lemma~\ref{lem:neighbors}.
\end{proof}

\subsection{Product Structure of Quotient Graphs}

Product structures and equitable partitions are compatible in the following
sense: 
\begin{proposition} \cite{BFF+12}
\label{prop:BFF}
  Let $G=\BOX_{i=1}^n G_i$ and let $\pi_i$ be an equitable partition on
  $G_i$. Then there is an equitable partition $\pi$ of $G$ such that 
  \begin{equation*}
    \BOX_{i=1}^n \overrightarrow{(G_i/\pi_i)} = \overrightarrow{(G/\pi)}
  \end{equation*}
\end{proposition}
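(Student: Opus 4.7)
The plan is to construct $\pi$ as the natural product partition and verify both that it is equitable and that its directed weighted quotient coincides with the Cartesian product of the $\overrightarrow{(G_i/\pi_i)}$. For vertices $v=(v_1,\dots,v_n),w=(w_1,\dots,w_n)\in V(G)$, declare $v\sim_\pi w$ if and only if $v_i$ and $w_i$ lie in a common class of $\pi_i$ for every $i$. The classes of $\pi$ are then precisely the sets $A_1\times\cdots\times A_n$ with $A_i$ a class of $\pi_i$, and they biject naturally with the vertex set of $\BOX_{i=1}^n(G_i/\pi_i)$.

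First I would verify that $\pi$ is equitable. Fix classes $A=A_1\times\cdots\times A_n$ and $B=B_1\times\cdots\times B_n$ and let $v\in A$. Any neighbor of $v$ in $G$ differs from $v$ in exactly one coordinate $j$, the $j$-th coordinate being replaced by a $G_j$-neighbor of $v_j$. Writing $m^{(j)}_{A_jB_j}$ for the entries of the partition degree matrix of $\pi_j$, the count $|N_G(v)\cap B|$ splits as a sum over coordinates: if $A$ and $B$ agree in every coordinate except a single $j$, only the $j$-th term contributes, namely $m^{(j)}_{A_jB_j}$; if $A=B$, each coordinate contributes $m^{(j)}_{A_jA_j}$; and if $A$ and $B$ differ in two or more coordinates, the count is zero. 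In every case the value depends only on the pair $(A,B)$ and not on the specific representative $v\in A$, so the partition degree matrix $\mathbf{M}=\{m_{AB}\}$ of $\pi$ is well defined.

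Next I would read off the edge weights of $\overrightarrow{G/\pi}$ from the above case analysis and compare them, coordinate by coordinate, with the recursive definition of the Cartesian product of weighted directed graphs given in Eq.~\eqref{eq:weights}. By induction on $n$, the weight in $\BOX_{i=1}^n\overrightarrow{(G_i/\pi_i)}$ of the directed edge from $(A_1,\dots,A_n)$ to $(B_1,\dots,B_n)$ equals $m^{(j)}_{A_jB_j}$ when the tuples differ in exactly one coordinate $j$, equals $\sum_j m^{(j)}_{A_jA_j}$ on loops (via the additive loop rule in Eq.~\eqref{eq:weights}), and vanishes otherwise. This agrees precisely with $\mathbf{M}$ on every directed edge, yielding the claimed identity.

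The main subtlety to watch is the loop case: Eq.~\eqref{eq:weights} adds loop weights across factors, while in $\overrightarrow{G/\pi}$ the intra-class neighbor count $m_{AA}$ naturally decomposes as a sum over coordinates precisely because every Cartesian edge modifies a single coordinate. Once this bookkeeping is handled uniformly, the loop and non-loop cases line up coordinate-wise and the equality of weighted digraphs follows.
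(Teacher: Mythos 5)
The paper itself gives no proof of this proposition: it is imported as a known result from \cite{BFF+12}, so there is no internal argument to compare yours against. Judged on its own, your proof is correct and supplies exactly what the paper leaves to the citation. The product partition $\pi$ with classes $A_1\times\cdots\times A_n$ is the natural (and surely the intended) choice; equitability follows just as you argue, because every edge of a Cartesian product alters a single coordinate, so $|N_G(v)\cap B|$ equals $m^{(j)}_{A_jB_j}$ when $A$ and $B$ differ only in coordinate $j$, equals $\sum_j m^{(j)}_{A_jA_j}$ when $A=B$, and vanishes otherwise --- in each case independent of the representative $v\in A$. Matching this matrix against the weights of $\BOX_{i=1}^n\overrightarrow{(G_i/\pi_i)}$ then needs only the observation that Eq.~\eqref{eq:weights}, iterated via your induction on $n$, reproduces precisely these three cases, with the additive loop rule producing $\sum_j m^{(j)}_{A_jA_j}$; you correctly flag this loop bookkeeping as the one delicate point. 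Two minor remarks, neither affecting correctness: the claimed identity is an equality of weighted digraphs only after the canonical identification of the class $A_1\times\cdots\times A_n$ with the tuple $(A_1,\dots,A_n)$, which you do state; and the induction tacitly uses that the weighted product of Eq.~\eqref{eq:weights} is associative, which is immediate (loop weights simply accumulate across factors) but worth a sentence in a fully written-out proof.
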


Since the Cartesian product of the underlying undirected and unweighted graphs is obtained
by simply omitting the weights, we also have 
\begin{equation*}
  \BOX_{i=1}^n (G_i/\pi_i) = (G/\pi)
\end{equation*}
for the same equitable partition $\pi$ of $G$. 

Our next result shows that equitable partitions constructed above arrange 
themselves as a special case of Prop.~\ref{prop:BFF}. 

\begin{theorem} \label{thm:prod}
  Let $R$ be a USP-relation on the edge set $E(G)$ of a connected graph $G$.
  Then
  \begin{equation*}
  G/\mathcal P^R\cong\BOX_{\vp \eqcl R} \G/\mathcal P^R_{\vpo}.
  \end{equation*}
\end{theorem}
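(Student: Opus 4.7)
The plan is to build an explicit isomorphism and verify three ingredients: well-definedness with injectivity, edge preservation in both directions, and surjectivity. Define
$\overline{F}\colon \mc P^R \to V\bigl(\BOX_{\vp \eqcl R} \G/\mc P^R_{\vpo}\bigr)$
by $\overline{F}(\inters(x)) = (V(\Go^x))_{\vp \eqcl R}$. Well-definedness and injectivity are immediate from the definition of $\inters(x)$: $\inters(x)=\inters(y)$ holds iff $y\in V(\Go^x)$ for every $\vp \eqcl R$, iff $V(\Go^x)=V(\Go^y)$ for every $\vp \eqcl R$, iff the two tuples coincide.

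Forward edge preservation is routine. An edge $[\inters(x),\inters(y)]$ of $G/\mc P^R$ is witnessed by some $[x',y']\in E(G)$ with $x'\in\inters(x)$ and $y'\in\inters(y)$; say $[x',y']\in\vp$. For every $\psi\neq\vp$ this edge lies in $\overline\psi$, so $x'$ and $y'$ share a $\overline\psi$-component and the $\psi$-entries of the tuples agree. In the $\vp$-coordinate, $V(\Go^{x'})$ and $V(\Go^{y'})$ are joined by a $\vp$-edge of the factor $\G/\mc P^R_{\vpo}$. Hence the two tuples differ in exactly one coordinate and there by an edge of the corresponding factor, which is precisely the definition of an edge in the Cartesian product.

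The substantive step is a lifting statement that simultaneously delivers surjectivity and reverse edge preservation: for every $x\in V(G)$ and every neighbor $T$ of $\overline{F}(\inters(x))$ in the product, there exists $w\in V(G)$ with $[x,w]\in E(G)$ and $\overline{F}(\inters(w))=T$. Such a $T$ differs from $\overline{F}(\inters(x))$ only in one coordinate $\vp$, where the new entry is some component $P'$ joined to $V(\Go^x)$ by a $\vp$-edge $[u,v]$ with $u\in V(\Go^x)$ and $v\in P'$. Observation~\ref{obs:not-vp-path} supplies a path $x=w_0,w_1,\ldots,w_k=u$ whose edges $[w_i,w_{i+1}]$ lie in classes $\psi_{i+1}\neq\vp$. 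Starting from $[w_k,v]=[u,v]\in\vp$, one lifts backwards along this path: by Observation~\ref{obs:squares}, the adjacent edges $[w_{i+1},v_{i+1}]\in\vp$ and $[w_{i+1},w_i]\in\psi_{i+1}$ span a square whose opposite edges lie in the same $R$-class, yielding a vertex $v_i$ with $[w_i,v_i]\in\vp$ and $[v_i,v_{i+1}]\in\psi_{i+1}$. Setting $w:=v_0$ gives $[x,w]\in\vp$, and the resulting $\overline\vp$-walk $v_0,\ldots,v_k=v$ shows $w\in V(\Go^v)=P'$. Since $[x,w]\in\vp$, the tuples of $x$ and $w$ agree in every coordinate $\psi\neq\vp$, and so $\overline{F}(\inters(w))=T$.

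Each factor $\G/\mc P^R_{\vpo}$ is connected because any $G$-path projects to a walk in the quotient (non-$\vp$-steps collapse and $\vp$-steps become quotient edges), so the product is connected as well. Iterating the lifting along a product-path from $\overline{F}(\inters(z_0))$ to any target $T$ places $T$ in the image, giving surjectivity, while the same lifting applied edgewise translates every product edge between images back into an edge of $G/\mc P^R$. The main obstacle is the parallel-path construction itself: without the squares guaranteed by the USP property via Observation~\ref{obs:squares}, the lifted vertices $v_i$ need not exist nor assemble into a coherent $\overline\vp$-walk from $w$ to $v$.
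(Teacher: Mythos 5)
Your proof is correct, and although it builds the same isomorphism as the paper (the map $\inters(x)\mapsto (V(\Go^x))_{\vp\eqcl R}$, with essentially identical well-definedness, injectivity, and forward edge-preservation arguments), you handle the substantive part by a genuinely different route. The paper establishes surjectivity globally: it proves by induction on the number of equivalence classes that $\bigcap_{i} V(\Goi^{v_i})\neq\emptyset$ for every choice of components, the inductive step resting on Lemma~\ref{lem:nonempty_intersect} applied to the two-class situation $\{\varphi_{k+1},\overline{\varphi_{k+1}}\}$ after showing $V(\Gkp^x)\subseteq\bigcap_{i=1}^k V(\Goi^x)$; reverse edge preservation is then obtained separately from Lemma~\ref{lem:neighbors}. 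You instead prove a single local lifting statement --- every product-neighbor of an image vertex is itself the image of a $G$-neighbor --- directly from Observation~\ref{obs:squares} via the parallel-walk construction (which is the same square-transport trick the paper uses inside its proof of Lemma~\ref{lem:neighbors}), and then obtain surjectivity by propagating the lift along paths in the connected product, with reverse edge preservation falling out as an immediate byproduct via injectivity. Your route is more self-contained: it bypasses both Lemma~\ref{lem:nonempty_intersect} (quoted from the literature) and Lemma~\ref{lem:neighbors}, needs only the existence of transporting squares plus connectivity of the quotient factors, and unifies the two remaining verifications into one lemma. What the paper's route buys is quantitative information: Lemma~\ref{lem:neighbors} yields equality of neighbor \emph{counts}, not mere existence, which is exactly what the subsequent corollary on the directed weighted quotient $\overrightarrow{G/\mc P^R}$ requires; your existence-only lifting would need to be strengthened to recover that refinement.
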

\begin{proof}
  Let $\varphi_1,\ldots \varphi_n$ denote the equivalence classes of $R$.
  Let $x,v_1,\ldots,v_n\in V(G)$, where the $v_i$ need not necessarily be
  distinct. If $x\in V(\Goi^{v_i})$ for all $i=1,\ldots n$ then
  $\inters(x)=\bigcap_{i=1}^n V(\Goi^{v_i})$.

  Remark, that for $1\leq i\leq n$ the vertex set of $\Gi/\mc P_{\vpo_i}^R$
  is given by $V(\Gi/\mc P_{\vpo_i}^R)=\{\Goi^{v_i}\mid v_i\in V(G)\}$.
  Hence, we have \[V(\BOX_{i=1}^n \Gi/\mathcal
  P_{\vpo_i}^R)=\left\{(\Goe^{v_1},\ldots,\Gon^{v_n})\mid v_i\in V(G),
    i=1,\ldots,n\right\},\] where
  $(\Goe^{v_1},\ldots,\Gon^{v_n})=(\Goe^{u_1},\ldots,\Gon^{u_n})$ if and
  only if $u_i\in V(\Goi^{v_i})$ for all $i=1,\ldots,n$.
  
  We define a mapping $V(G/\mathcal P^R) \rightarrow
  V(\BOX_{i=1}^n \Gi/\mathcal P_{\vpo_i}^R)$ as follows: \[\inters(x)\mapsto
  (\Goe^{v_1},\ldots,\Gon^{v_n})\] iff $x\in V(\Goi^{v_i})$ for all
  $i=1,\ldots n$.
    
  For all $x\in V(G)$ there exist $v_i$, $i=1,\ldots,n$ such that $x\in
  V(\Goi^{v_i})$, e.g. choose $v_i=x$. And since from $x\in V(\Goi^{v_i})$
  and $x\in V(\Goi^{u_i})$ follows $\Goi^{v_i}=\Goi^{u_i}$, this mapping is
  well defined.
  
  Due to the fact that $x\in V(\Goi^{v_i})$ and $y\in V(\Goi^{v_i})$ implies
  $\Goi^x=\Goi^y$, we can conclude that this mapping is injective. To prove
  surjectivity, it suffices to show, that $\cap_{i=1}^n V(\Goi^{v_i})\neq
  \emptyset$ for arbitrary $v_i\in V(G)$. We show by induction for all
  $k\leq n$ holds $\cap_{i=1} ^k V(\Goi^{v_i}) \neq \emptyset$. For $k=1$ this
  is trivially fulfilled. Let $k\geq 1$ and suppose $\cap_{i=1} ^k
  V(\Goi^{v_i}) \neq \emptyset$. We have to show, that this implies 
  $\cap_{i=1}^{k+1} V(\Goi^{v_i}) \neq \emptyset$. 
  From the induction hypothesis, we can
  conclude there must be a vertex $x\in V(G)$ such that $x\in V(\Goi^{v_i})$
  for all $i=1,\ldots,k$ and hence $\cap_{i=1} ^k V(\Goi^{v_i})=\cap_{i=1} ^k
  V(\Goi^x)$ for all $i=1,\ldots,k$. Therefore, we have to show 
  \begin{equation}	\label{eq:subset}
    V(\Gkp^x)\subseteq\bigcap_{i=1}^k V(\Goi^x).
  \end{equation}
  From that and Lemma~\ref{lem:nonempty_intersect} we can conclude
  $\emptyset\neq V(\Gkp^x)\cap V(\Gokp^{v_{k+1}})\subseteq
  \bigcap_{i=1}^{k+1} V(\Goi^{v_i})$, from what the assumption follows.
  
  Let $y\in V(\Gkp^x)$. Then there exists a path $Q$ from $x$ to $y$ such
  that all edges of $Q$ are in class $\varphi_{k+1}$. Clearly, they are not
  in class $\varphi_i$ for $i=1,\ldots k$ and therefore $y\in V(\Goi^x)$
  for all $i=1,\ldots k$, from what Equation~\eqref{eq:subset} and finally
  surjectivity follows.

  \begin{figure}[tbp]
    \centering
    \includegraphics[bb= 112 220 570 461, scale=0.75]{./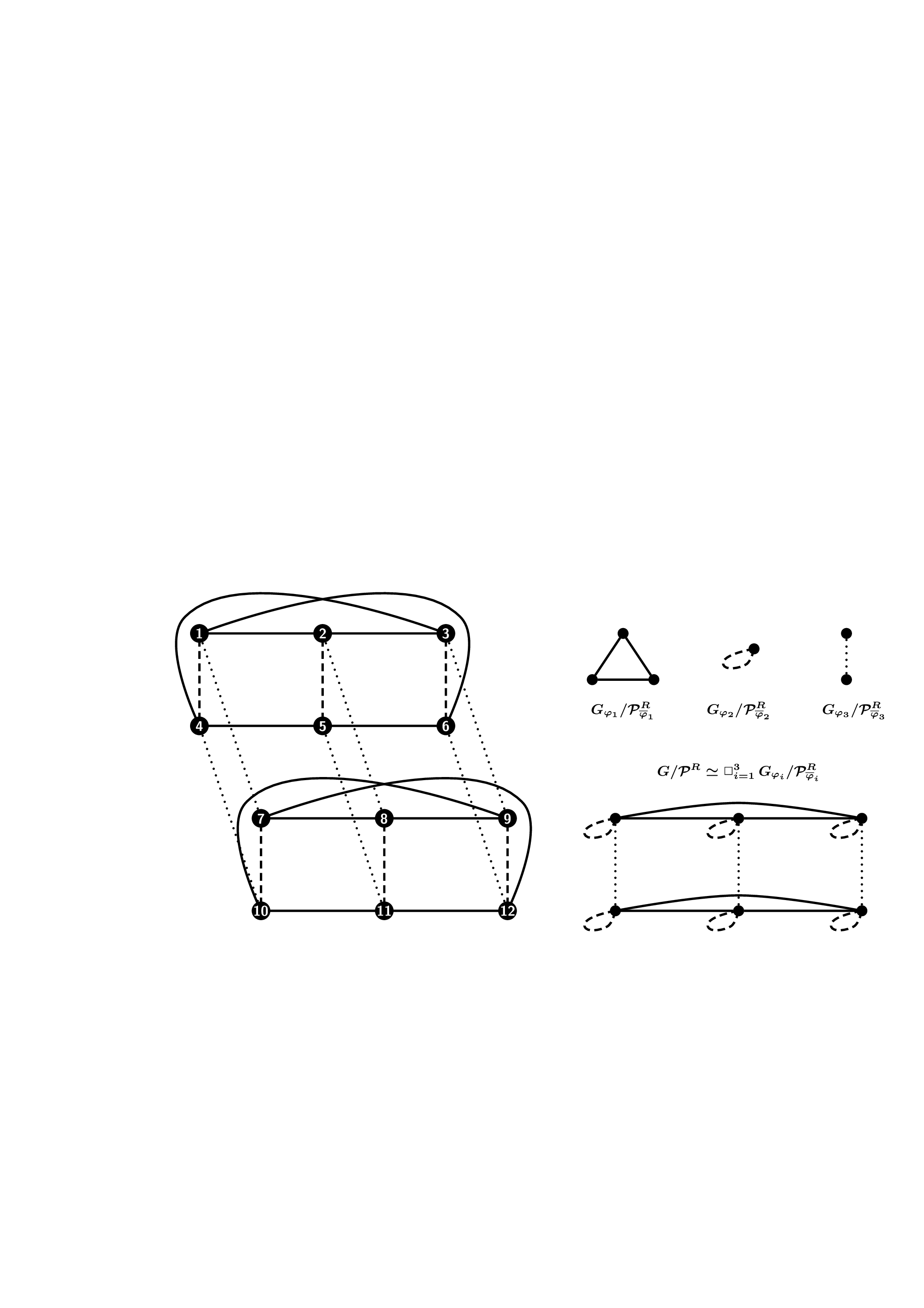}
    \caption{The equivalence relation $R$ on $E(G)$ with 
      equivalence classes $\vp_1$ (solid), $\vp_2$ (dashed), and 
      $\vp_3$ (dotted) has the unique square property. We have
      $\mc P^R_{\vpo_1}=\{\{1,4,7,10\},\{2,5,8,11\},\{3,6,9,12\}\}$,
      $\mc P^R_{\vpo_2}=\{V(G)\}$,
      $\mc P^R_{\vpo_3}=\{\{1,2,\ldots,6\},\{7,8,\ldots,12\}\}$
      and  $\mc P^R=\{\{1,4\},\{2,5\},\{3,6\},\{7,10\},\{8,11\},\{9,12\}\}$. 
      The corresponding quotient graphs 
      $G_{\vp_i}/\mc{P}_{\vpo_i}^R$, $i=1,2,3$ and 
      the product graph $G/\mc P^R$ are shown on the right-hand side.
    }          
    \label{fig:exmplProd}
  \end{figure}
  
  It remains to verify the isomorphism property, that is
  $[\inters(x),\inters(y)]$ is an edge in $G/\mc P^R$ if and only if
  $[(\Goe^x,\ldots,\Gon^x),(\Goe^y,\ldots,\Gon^y)]$ is an edge in
  $\BOX_{i=1}^n\Gi/\mc P_{\vpo_i}^R$. Let $[\inters(x),\inters(y)]\in
  E(G/\mc P^R)$, that is, there exists a vertex $x'\in\inters(x)$ and a
  vertex $y'\in\inters(y)$ such that $[x',y']$ is an edge in $G$ and
  therefore in $\varphi_i$ for some $i$, $1\leq i\leq n$. This implies
  $\Goj^{x}=\Goj^{x'}=\Goj^{y'}=\Goj^y$ for all $j\neq i$ and
  $[\Goi^x,\Goi^y]\in E(\Gi/\mc P_{\vpo_i}^R)$. Thus,
  $[(\Goe^x,\ldots,\Gon^x),(\Goe^y,\ldots,\Gon^y)]$ is an edge in
  $\BOX_{i=1}^n\Gi/\mc P_{\vpo_i}^R$. Conversely, let
  $[(\Goe^x,\ldots,\Gon^x),(\Goe^y,\ldots,\Gon^y)]\in E(\BOX_{i=1}^n\Gi/\mc
  P_{\vpo_i}^R)$. There must be an $i$, $1\leq i\leq n$ such that
  $[\Goi^x,\Goi^y]\in E(\Gi/\mc P_{\vpo_i}^R)$ and $\Goj^x=\Goj^y$ for all
  $j\neq i$. $[\Goi^x,\Goi^y]\in E(\Gi/\mc P_{\vpo_i}^R)$ implies that there
  exists a vertex $x'\in V(\Goi^x)$ and a vertex $y'\in V(\Goj^y)$ such
  that $[x',y']\in\varphi_i$ in $G$. From Lemma~\ref{lem:neighbors}, we can
  conclude that there exists a vertex $z\in V(\Goi^y)$ such that
  $[x,z]\in\varphi_i$. This in turn implies $z\in V(\Goj^x)=V(\Goj^y)$ for
  all $j\neq i$ and thus, $z\in\inters (y)$.
  Hence, $[\inters(x),\inters(y)]$ is an edge in $G/\mc P^R$.  
\end{proof}

\begin{figure}[t]
  \centering
  \includegraphics[bb=58 266 537 580, scale=0.6]{./b2.ps}
  \caption{The left panel shows a graph with a USP-relation $R$ whose 
    equivalence classes are highlighted by dashed and solid edges. 
    The corresponding quotient graph $G/\mc P^R$ and its Cartesian prime
    factors are shown on the right-hand side.}
 \label{fig:OtherExample}
\end{figure}

\begin{corollary}
  Suppose the conditions of Theorem~\ref{thm:prod} are satisfied.  If
  furthermore $\Go$ is an induced subgraph of $G$ for all $\vp \eqcl R$
  then
  \begin{equation*}
    G/\mathcal P^R\cong\BOX_{\vp\eqcl R} \mc N(G/\mathcal P_{\vpo}^R).
  \end{equation*}
\end{corollary}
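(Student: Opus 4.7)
The plan is to combine Theorem~\ref{thm:prod} with the observation preceding Theorem~\ref{thm:equitpart}, according to which $\G/\mc P^R_{\vpo}\cong\mc N(G/\mc P^R_{\vpo})$ whenever the appropriate induced-subgraph condition is met. Theorem~\ref{thm:prod} already gives
\begin{equation*}
  G/\mc P^R\cong\BOX_{\vp\eqcl R}\G/\mc P^R_{\vpo},
\end{equation*}
so the only remaining task is to swap each factor $\G/\mc P^R_{\vpo}$ on the right-hand side for $\mc N(G/\mc P^R_{\vpo})$.

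To justify this swap, for each $\vp\eqcl R$ I would reuse the short argument sketched in the text: any edge of $\mc N(G/\mc P^R_{\vpo})$ between two distinct classes $\Go^x\neq\Go^y$ is witnessed by some $[x',y']\in E(G)$, and this edge must lie in $\vp$ (otherwise $x'$ and $y'$ would belong to the same $\vpo$-component, contradicting $\Go^x\neq\Go^y$), hence it already belongs to $\G$ and yields the matching edge in $\G/\mc P^R_{\vpo}$. Conversely, every edge of $\G/\mc P^R_{\vpo}$ is by definition a $\vp$-edge of $G$ and therefore produces an edge in $\mc N(G/\mc P^R_{\vpo})$. The induced-subgraph hypothesis on $\Go$ enters precisely to exclude spurious loops in $\G/\mc P^R_{\vpo}$ that would not appear in $\mc N(G/\mc P^R_{\vpo})$, so the two quotients agree as unweighted simple graphs.

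Chaining the product decomposition of Theorem~\ref{thm:prod} with the factorwise isomorphisms $\G/\mc P^R_{\vpo}\cong\mc N(G/\mc P^R_{\vpo})$, and using that the Cartesian product respects isomorphisms of factors, yields
\begin{equation*}
  G/\mc P^R\cong\BOX_{\vp\eqcl R}\G/\mc P^R_{\vpo}\cong\BOX_{\vp\eqcl R}\mc N(G/\mc P^R_{\vpo}),
\end{equation*}
as required. The only real obstacle is pinpointing how the induced-subgraph hypothesis is used to suppress extra loops in the individual factors; once that is settled, the argument is purely formal and requires no new combinatorial input beyond what was already established for Theorem~\ref{thm:prod}.
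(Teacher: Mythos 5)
Your proposal is correct and is essentially the paper's own proof: the paper, too, just combines Theorem~\ref{thm:prod} with the observation that $\G/\mc P^R_{\vpo}\cong\mc N(G/\mc P^R_{\vpo})$ once loops are excluded, the only difference being that the paper rules out loops once and for all in $G/\mc P^R$ (a loop there would be an edge $[y,z]\in E(G)$ with $y,z\in\inters(x)$, hence $[y,z]\in\vp$ for some class with $y,z\in V(\Go^x)$), whereas you rule them out factor by factor. The step you flag as the ``only real obstacle'' is in fact immediate from the definitions: a loop of $\G/\mc P^R_{\vpo}$ at the vertex $\Go^x$ means there is an edge $[u,v]\in\vp$ with $u,v\in V(\Go^x)$, i.e.\ an edge of $G$ joining two vertices of the component $\Go^x$ that is not an edge of $\Go$, which is precisely a violation of the hypothesis that $\Go$ is an induced subgraph of $G$.
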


\begin{proof}
  It suffices to show that $G/\mc P^R$ has no loops if all $\Go$ are
  induced. We will prove this by contradiction. Therefore, assume that $G/\mc
  P^R$ contains a loop $[\inters (x),\inters (x)]$ for some $x\in V(G)$.
  Hence, there are vertices $y,z\in \inters(x)$ with $[y,z]\in E(G)$.
  Clearly, $[y,z]\in\varphi$ for some $\vp \eqcl R$. But since $y,z\in
  V(\Go^x)$ it follows that $\Go$ is not induced, a contradiction.
\end{proof}

\begin{corollary}
  If the conditions of Theorem~\ref{thm:prod} are satisfied, then
  \begin{equation*}
    \overrightarrow{G/\mc P^R}\cong\BOX_{\vp\eqcl R} 
    \overrightarrow{\G/\mc P_{\vpo}^R}.
  \end{equation*}
\end{corollary}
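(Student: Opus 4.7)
The plan is to leverage Theorem~\ref{thm:prod}, which already supplies the vertex bijection $\Phi\colon \inters(x)\mapsto (\Goe^x,\ldots,\Gon^x)$ together with an isomorphism of the underlying unweighted undirected graphs. Given this, only one thing remains: to verify that $\Phi$ also preserves the arc-weights of the directed weighted quotient graphs. My strategy is to expand each weight through the equitable-partition formula $m_{AB}=|N_G(x)\cap B|$ and then to decompose by $R$-classes.

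First, by Theorem~\ref{thm:equitpart} the partition $\mc P^R$ is equitable, so the weight of the arc $(\inters(x),\inters(y))$ in $\overrightarrow{G/\mc P^R}$ equals $|N_G(x)\cap\inters(y)|$ for any representative $x$. Using that $N_G(x)=\bigsqcup_{\vp\eqcl R}\N(x)$, this splits as $\sum_{\vp\eqcl R}|\N(x)\cap\inters(y)|$.

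The crucial middle step is to identify each summand with a factor weight. Here the observation is that if $z\in\N(x)\cap\inters(y)$, then $[x,z]\in\vp$, so $\Gopsi^x=\Gopsi^z$ whenever $\psi\neq\vp$; combined with $z\in V(\Gopsi^y)$ this forces $\Gopsi^x=\Gopsi^y$ for all $\psi\neq\vp$. Consequently, whenever any coordinate distinct from $\vp$ distinguishes $\Phi(\inters(x))$ from $\Phi(\inters(y))$, the $\vp$-summand vanishes. When the $\vp$-summand does not vanish, Lemma~\ref{lem:neighborhood-cut}(2) yields $|\N(x)\cap\inters(y)|=|\N(x)\cap V(\Go^y)|$, and by the corollary after Lemma~\ref{lem:neighbors} this cardinality equals the weight on the arc $(\Go^x,\Go^y)$ in $\overrightarrow{\G/\mc P^R_{\vpo}}$.

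Finally, I will match the result with the weight rule \eqref{eq:weights} for the Cartesian product of directed weighted graphs (extended to $n$ factors by associativity), split by the number $d$ of indices at which $\Phi(\inters(x))$ and $\Phi(\inters(y))$ differ. If $d\ge 2$, every summand vanishes and the product weight is $0$; if $d=1$ with the difference at index $i$, only the $\vp_i$-summand survives and coincides with the $i$-th factor weight; if $d=0$, every summand is a loop weight and the total matches the sum-of-loops rule. The bulk of the work sits in this case split; the only place where real care is needed is the loop case, where one must note that contributions from distinct $R$-classes come from disjoint $\N(x)$'s and therefore add without overlap, which is exactly the sum-of-loops prescription of \eqref{eq:weights}. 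I do not anticipate a serious obstacle beyond this bookkeeping.
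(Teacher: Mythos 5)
Your proposal is correct and takes essentially the same approach as the paper's (much more compressed) proof, which likewise reduces everything to the underlying unweighted isomorphism from Theorem~\ref{thm:prod} plus the decomposition $|N_G(x)\cap \inters(y)|=\sum_{\vp\eqcl R}|\N(x)\cap \inters(y)|$ combined with Lemma~\ref{lem:neighborhood-cut} and Lemma~\ref{lem:neighbors}. The coordinate-wise case split you carry out is exactly what the paper leaves implicit in the phrase ``the weights are transferred as in Eq.~(\ref{eq:weights})''.
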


\begin{proof}
  Since the underlying undirected and unweighted graphs of
  $\overrightarrow{G/\mathcal{P}^R}$ and
  $\overrightarrow{\G/\mathcal{P}_{\vpo}^R}$ are exactly $G/\mathcal{P}^R$
  and $\G/\mathcal{P}_{\vpo}^R$, respectively, it suffices to show that the
  weights are transferred as in Eq.(\ref{eq:weights}).  This follows
  immediately from Lemma~\ref{lem:neighborhood-cut} and
  Lemma~\ref{lem:neighbors} and the fact that $|N_G(x)\cap
  V_R(y)|=\sum_{\vp\eqcl R}|\N(x)\cap V_R(y)|$.
\end{proof}

With the help of the results obtained in this section we can strengthen 
a useful result of \cite{Imrich:94}:

\begin{theorem} \label{thm:prodrel} Let $Q$ be a USP-relation on the edge
  set $E(G)$ of a connected graph $G$ and let $\vp \eqcl Q$. Then 
  $|V(\G^x)\cap V(\Go^y)|=1$ holds for all $x,y\in V(G)$ if and only if 
  $R = \{\vp, \vpo\}$ is a product relation.
\end{theorem}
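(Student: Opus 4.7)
The ``if'' direction is the result of \cite{Imrich:94} cited immediately before the theorem, so my plan focuses on the converse. The idea is to use the cardinality-one hypothesis to collapse $\mc P^R$ to singletons, invoke Theorem~\ref{thm:prod} to obtain a Cartesian factorization of $G$, and then trace through its explicit isomorphism to match the classes $\vp$ and $\vpo$ with the two coordinate directions.

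First I would check that $R=\{\vp,\vpo\}$ is itself a USP-relation so that Theorem~\ref{thm:prod} applies. Since $\vp\eqcl Q$, every $Q$-class lies in either $\vp$ or $\vpo$, giving $Q\subseteq R$; combined with a USP-refinement $Q'\subseteq Q$ guaranteed by the assumption that $Q$ is USP, one gets $Q'\subseteq R$, so $R$ is USP. Because $x\in V(\G^x)\cap V(\Go^x)$ always, the cardinality-one hypothesis forces $V(\G^x)\cap V(\Go^x)=\{x\}$, and since $R$ has only two classes this means $\inters(x)=\{x\}$ for every $x\in V(G)$. Thus $\mc P^R$ is the partition into singletons, the canonical map $x\mapsto\inters(x)$ realizes $G\cong G/\mc P^R$, and Theorem~\ref{thm:prod} yields
\[
G\;\cong\;G/\mc P^R\;\cong\;\G/\mc P^R_{\vpo}\;\BOX\;\Go/\mc P^R_{\vp}
\]
under the explicit isomorphism $x\mapsto(\Go^x,\G^x)$ read off from the proof of that theorem, in which the two coordinates are indexed by the two classes of $R$.

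To conclude, I would verify that $R$ is the product relation of this factorization. Take an edge $[x,y]\in\vp$: then $y\in V(\G^x)$, so $\G^x=\G^y$, fixing the second coordinate of the image. Moreover $\Go^x\neq\Go^y$, for otherwise $y\in V(\G^x)\cap V(\Go^x)=\{x\}$ would force $y=x$, contradicting that $[x,y]$ is an edge. Hence $[x,y]$ is sent to a genuine edge of the first factor $\G/\mc P^R_{\vpo}$ with the second coordinate fixed; a symmetric argument shows each $\vpo$-edge is sent to an edge of the second factor with the first coordinate fixed. This is precisely the product-relation condition. The main obstacle is bookkeeping rather than algebra: once $\inters(x)=\{x\}$ is established, Theorem~\ref{thm:prod} delivers the Cartesian decomposition essentially for free, and the only delicate point is ruling out that a $\vp$-edge could collapse to a loop (both coordinates fixed) under the isomorphism. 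That degenerate case is exactly what the cardinality-one hypothesis excludes, so no new machinery beyond the results of the previous section is required.
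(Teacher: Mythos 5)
Your proposal is correct and follows essentially the same route as the paper: both reduce the converse to the observation that $\inters(x)=V(\G^x)\cap V(\Go^x)=\{x\}$ for every $x\in V(G)$, so that $G\cong G/\mc P^R$, and then invoke Theorem~\ref{thm:prod} to obtain the factorization $G\cong \G/\mc P^R_{\vpo}\,\BOX\,\Go/\mc P^R_{\vp}$. The only minor difference is in the endgame: you verify the product-relation condition directly by tracing $\vp$- and $\vpo$-edges through the explicit isomorphism of Theorem~\ref{thm:prod} (which is a perfectly valid, arguably more explicit, way to finish), whereas the paper instead identifies the two quotient factors with the layers $\G^y$ and $\Go^y$ via Lemmas~\ref{lem:nonempty_intersect} and~\ref{lem:neighbors} to conclude $G\cong\G^x\BOX\Go^y$.
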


\begin{proof}
	It has been shown in \cite{Imrich:94} that for product relations, 
	that  i.e., convex USP-relations, holds $|V(\G^x)\cap V(\Go^y)|=1$ for all
	$x,y\in V(G)$. It remains to show, therefore, that converse is also 
	true.
  
  Notice that $\overline{\vpo}=\vp$. Hence the equitable partition induced
  by $R$ is $\mathcal{P}^R=\{V(\G^x)\cap V(\Go^x)\mid x\in V(G)\}$. 
  By assumption, $\mathcal{P}^R$ consists exclusively of singletons. Thus
  $G=G/\mc P^R$.  Recall that $\mc P^R_{\vpo}=\{V(\Go^x)\mid x\in V(G)\}$
  and $\mc P^R_{\vp}=\{V(\G^x)\mid x\in V(G)\}$ are the equitable
  partitions of the graphs $\G$ and $\Go$ respectively.  For arbitrary
  $y\in V(G)$ let $\mc P_{\vpo}^R(y)$ denote the restriction of $\mc
  P_{\vpo}$ to the connected component $\G^y$ of $\G$, that is $\mc
  P_{\vpo}^R(y)=\{V(\Go^x)\cap V(\G^y)\mid x\in V(G)\}$.  From
  Lemma~\ref{lem:nonempty_intersect}, Lemma~\ref{lem:neighbors} and the
  definition of the quotient graphs, we can conclude that the mapping
  $\Go^x\cap \G^y\mapsto\Go^x$ defines an isomorphism $\G/\mc
  P_{\vpo}^R\cong \G^y/\mc P_{\vpo}^R(y)$ for all $y\in V(G)$ and since
  $|V(\G^x)\cap V(\Go^y)|=1$ holds for all $x,y\in V(G)$, we even have
  $\G/\mc P_{\vpo}^R\cong \G^y$. Analogously, it follows $\Go/\mc
  P_{\vpo}^R\cong \Go^y$ for all $y\in V(G)$. Thus, $ G\cong
  \G^x\BOX\Go^y$ for all $x,y\in V(G)$, demonstrating that 
   $R = \{\vp, \vpo\}$ is a product relation.
  \end{proof}

\subsection{Refinements and Coarse Graining}

Given a graph $G$ and a nontrivial USP-relation $R$ on $E(G)$ it will
often be the case that $G/\mc P^R$ has no ``real'' product structure, as
in the example of Figure~\ref{fig:CounterUnion}. Here, $V(\Goi)=V(G)$ for
each of the four equivalence classes, so that $G/\mc P^R$ is the trivial
graph $\mc LK_1$ consisting of a single vertex with a loop. In
Section~\ref{sec:prelim}, we have shown that a coarse graining $S$ of a
USP-relation $R$ in general leads to a refinement $\mc P^S$ of the vertex
partition $\mc P^R$. Hence we can expect to obtain larger quotient graph
$G/\mc P^S$ with a ``richer'' product structure. This is indeed sometimes
the case as shown by the example in Fig.~\ref{fig:join}.

\begin{figure}[t]
  \centering
  \includegraphics[bb= 30 590 530 740, scale=0.75]{./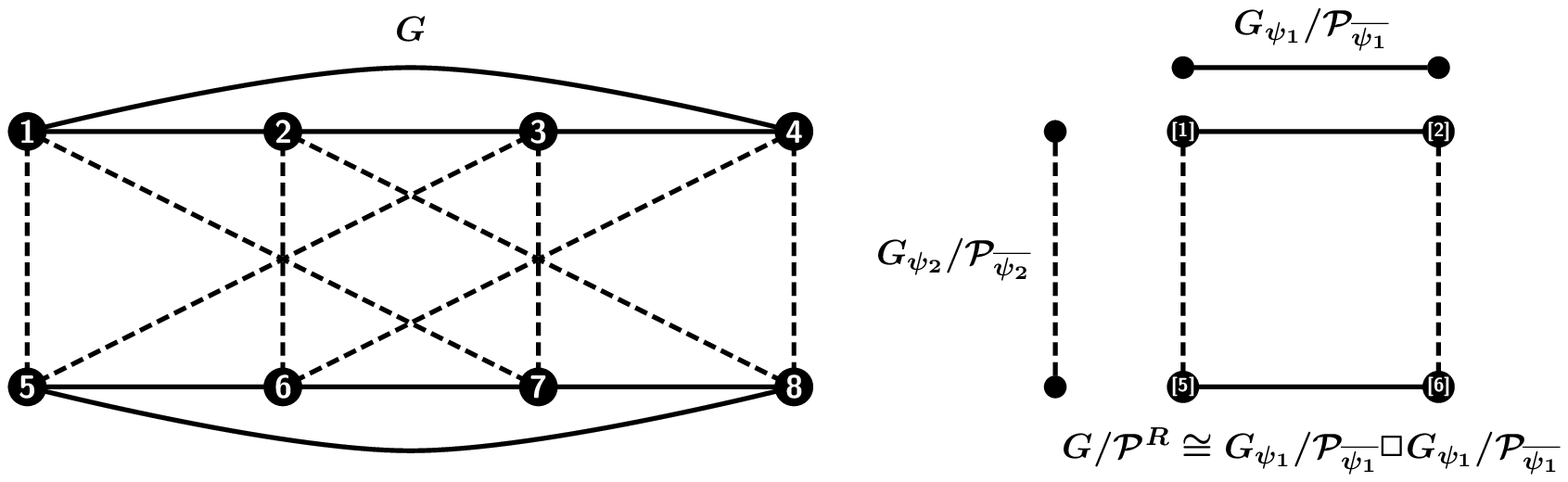}
  \caption{The coarse graining $R=\{\psi_1=\vp_1\cup
    \vp_2,\psi_2=\vp_3\cup\vp_4\}$ obtained from the equivalence relation 
    $Q$ of Fig.~\ref{fig:CounterUnion} generates the quotient graph 
    $G/\mc P^R\cong K_2\Box K_2$ with non-trivial product structure.
  }
  \label{fig:join}
\end{figure}

However, a coarser relation $S\supseteq R$ does not always lead to a
partition $\mc P^S$ that is strictly finer than $\mc P^R$, see
Fig.~\ref{fig:exmpljoin} for an example. In this section we therefore
explore the conditions under which a \emph{strictly} finer partition $\mc
P^S$ of the vertex set is obtained by a coarser equivalence relation
$S\supseteq R$.

\begin{figure}[t]
  \centering
  \includegraphics[bb= 112 180 570 440, scale=0.75]{./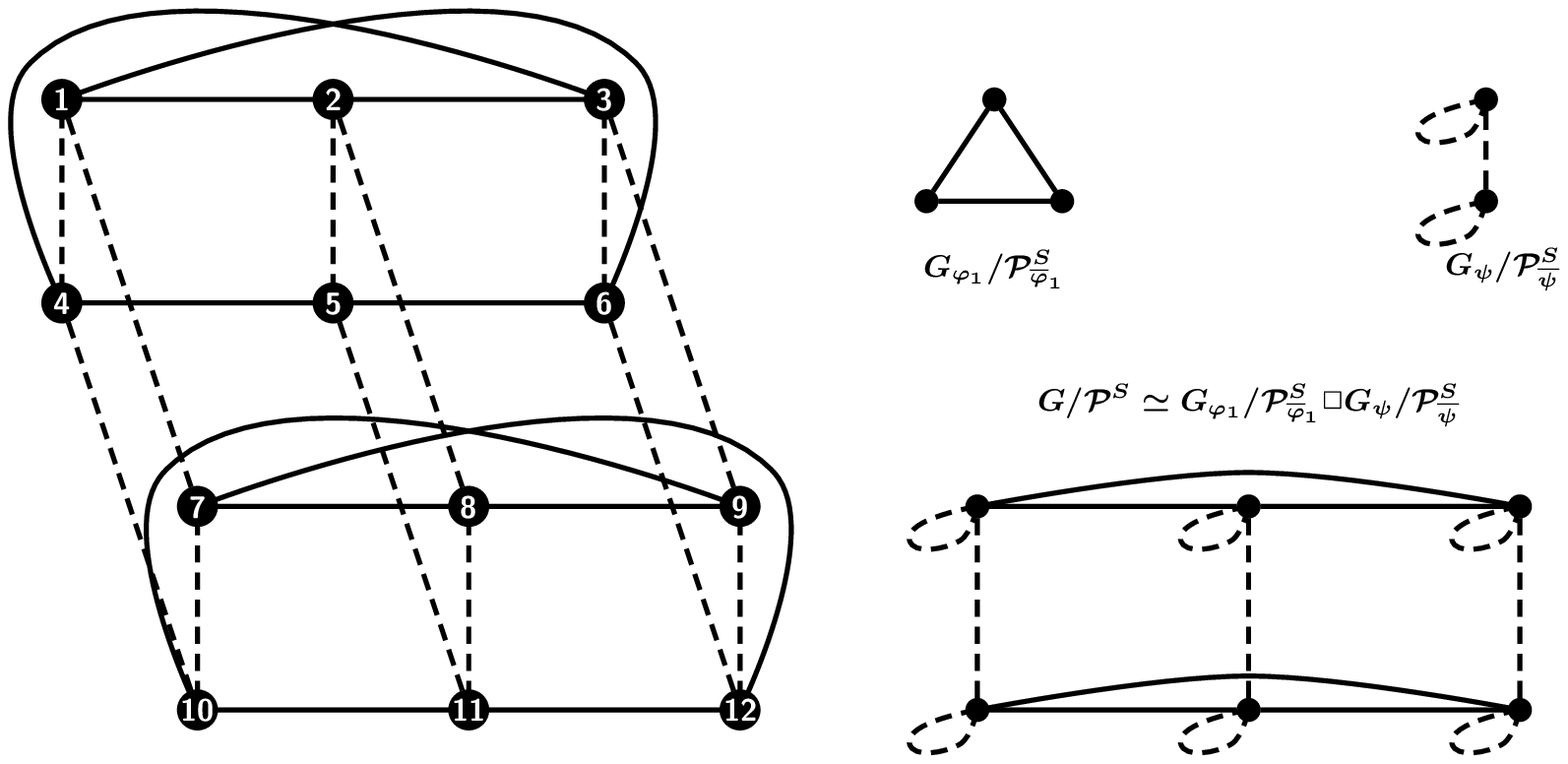}
  \caption{The coarse graining $S=\{\vp_1,\psi=\vp_2\cup\vp_3\}$ of
    the relation $R$ of Fig.~\ref{fig:exmplProd} leads to the same
    partition $\mc P^S=\mc P^R$ of $V(G)$ and thus to identical quotient
    graphs.}
  \label{fig:exmpljoin}
\end{figure}
  
\begin{proposition} \label{prop:components} Let $\varphi$ and $\psi$ be 
  two equivalence classes of a USP-relation $R$ on the edge set $E(G)$ of 
  a connected graph $G$.  Then for all $x\in V(G)$ holds
  \begin{equation*}
    V(G_{\varphi\cup\psi}^x)=
    \bigcup_{y\in V(G_{\varphi}^x)} V(G_{\psi}^y)=
    \bigcup_{y\in V(G_{\psi}^x)} V(G_{\varphi}^y).
  \end{equation*}
\end{proposition}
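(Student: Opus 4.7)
The two equalities are related by the symmetry $\varphi \leftrightarrow \psi$, so it suffices to prove
$V(G_{\varphi\cup\psi}^x) = \bigcup_{y\in V(G_\varphi^x)} V(G_\psi^y)$. The inclusion $\supseteq$ is
straightforward: if $y\in V(G_\varphi^x)$ and $z\in V(G_\psi^y)$, then concatenating a $\varphi$-path
from $x$ to $y$ with a $\psi$-path from $y$ to $z$ gives a walk in $G_{\varphi\cup\psi}$, so
$z\in V(G_{\varphi\cup\psi}^x)$. I will therefore focus entirely on the reverse inclusion.

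The plan for $\subseteq$ is a standard ``commutation'' argument driven by
Observation~\ref{obs:squares}. Given $z\in V(G_{\varphi\cup\psi}^x)$, choose a walk
$W=(x=w_0,w_1,\ldots,w_\ell=z)$ whose edges all lie in $\varphi\cup\psi$. Define the
\emph{inversion count} $\iota(W)$ as the number of indices $i$ with $[w_{i-1},w_i]\in\psi$
and $[w_i,w_{i+1}]\in\varphi$ (i.e., the number of places where a $\psi$-edge is
immediately followed by a $\varphi$-edge). The key step is the reduction: if $\iota(W)>0$,
produce a walk $W'$ from $x$ to $z$ in $G_{\varphi\cup\psi}$ with $\iota(W')<\iota(W)$.
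Pick an index $i$ realizing an inversion. Since $\varphi\ne\psi$, the two adjacent edges
$[w_{i-1},w_i]$ and $[w_i,w_{i+1}]$ lie in distinct $R$-classes, so
Observation~\ref{obs:squares} furnishes a vertex $w_i'$ with
$[w_{i-1},w_i']\in\varphi$ and $[w_i',w_{i+1}]\in\psi$ (these are the edges opposite
to $[w_i,w_{i+1}]$ and $[w_{i-1},w_i]$ in the square spanned by the pair, hence in the
same $R$-classes, respectively). Replacing $w_i$ by $w_i'$ in $W$ yields a new walk $W'$
of the same length whose $i$-th position is no longer an inversion; I then check that
no new inversions are created at positions $i-1$ or $i+1$ (the edges
$[w_{i-2},w_{i-1}]$ and $[w_{i+1},w_{i+2}]$, if present, keep their classes, and the
newly inserted edges at $i$ have the opposite class-pattern $\varphi\psi$ instead of
$\psi\varphi$). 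Hence $\iota(W')<\iota(W)$.

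Iterating this reduction terminates at a walk $W^*=(x=u_0,u_1,\ldots,u_\ell=z)$ with
$\iota(W^*)=0$, i.e., no $\psi$-edge is ever followed by a $\varphi$-edge. This forces a
split: there is an index $j$ such that all edges $[u_{i-1},u_i]$ with $i\le j$ lie in
$\varphi$ and all edges with $i>j$ lie in $\psi$. Then $y:=u_j$ satisfies
$y\in V(G_\varphi^x)$ and $z\in V(G_\psi^y)$, completing the inclusion.

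The main obstacle is the inductive reduction step: one must verify carefully that the
local swap at position $i$ strictly lowers $\iota$ and does not spawn a compensating
inversion at the neighboring positions. Once the bookkeeping on the edge-classes of
$[w_{i-2},w_{i-1}]$, $[w_{i-1},w_i']$, $[w_i',w_{i+1}]$, and $[w_{i+1},w_{i+2}]$ is
laid out (using only that the new edges are in $\varphi$ and $\psi$ respectively, and
that the adjacent outer edges of $W$ are unchanged), the well-founded induction on
$\iota$ closes the argument, and the symmetric identity with $\psi$ follows by
exchanging the roles of $\varphi$ and $\psi$ throughout.
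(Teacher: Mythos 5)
Your overall strategy---re-deriving the two-class commutation fact directly from Observation~\ref{obs:squares}---is genuinely different from the paper's proof, which instead restricts $R$ to the connected component $G_{\varphi\cup\psi}^x$, observes that this restriction is an equivalence relation with exactly the two classes $\varphi$ and $\psi$, and then invokes Lemma~\ref{lem:nonempty_intersect} to obtain $V(G_{\psi}^z)\cap V(G_{\varphi}^x)\neq\emptyset$, from which the split vertex $y$ is immediate. In effect you are re-proving the content of that lemma from scratch; that is legitimate, and your skeleton (turn $\psi\varphi$ patterns into $\varphi\psi$ patterns via the squares guaranteed by Observation~\ref{obs:squares}, terminate, read off the split vertex) is sound. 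In particular the square does give $[w_{i-1},w_i']\in\varphi$ and $[w_i',w_{i+1}]\in\psi$, exactly as you claim.

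However, your termination argument has a genuine error: the quantity $\iota(W)$, the number of \emph{adjacent} $\psi\varphi$ patterns, is not monotone under your swap, and the parenthetical check is false. After the swap the edge $[w_{i-1},w_i']$ lies in $\varphi$, so if $[w_{i-2},w_{i-1}]\in\psi$ a \emph{new} inversion appears at position $i-1$; symmetrically, $[w_i',w_{i+1}]\in\psi$, so if $[w_{i+1},w_{i+2}]\in\varphi$ a new inversion appears at position $i+1$. Concretely, the class sequence $\psi\psi\varphi\varphi$ has $\iota=1$, but swapping at its unique adjacent inversion yields $\psi\varphi\psi\varphi$ with $\iota=2$: your reduction can strictly \emph{increase} $\iota$, so the induction does not close. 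The repair is standard: replace $\iota$ by the global inversion count
\begin{equation*}
I(W) := \bigl|\{(i,j) \mid i<j,\ e_i\in\psi,\ e_j\in\varphi\}\bigr|,
\end{equation*}
where $e_1,\dots,e_\ell$ are the edges of $W$ in order. An adjacent swap $\psi\varphi\to\varphi\psi$ affects no pair except the swapped one, because any third position sees the same multiset $\{\varphi,\psi\}$ at the two swapped slots, entirely on one side of it; hence $I(W')=I(W)-1$ exactly. Since $I(W)=0$ forces every $\varphi$-edge to precede every $\psi$-edge, the rest of your argument---choosing $y=u_j$ at the split and exchanging the roles of $\varphi$ and $\psi$ for the second equality---then goes through verbatim.
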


\begin{proof}
  It suffices to show the first equation.  Therefore, let
  $z\in\bigcup_{y\in V(G_{\varphi}^x)} V(G_{\psi}^y)$, that is, there
  exists a vertex $y'\in V(G_{\varphi}^x)$ such that $z\in
  V(G_{\psi}^{y'})$.  Hence, there is a path $P_{x,y'}$ from $x$ to
  $y'$ in $\varphi$ and a path $P_{y',z}$ from $y'$ to $z$ in $\psi$.
  Thus, $P_{x,y'}\cup P_{y',z}$ is a path from $x$ to $z$ in
  $\varphi\cup\psi$ and therefore $z\in V(G_{\varphi\cup\psi}^x)$ from
  which we can conclude $\bigcup_{y\in V(G_{\varphi}^x)} V(G_{\psi}^y)
  \subseteq V(G_{\varphi\cup\psi}^x)$.

  Now, let $z\in V(G_{\varphi\cup\psi}^x)$.  Clearly, the restriction of
  $R$ to $G_{\varphi\cup\psi}^x$ is an equivalence relation on
  $E(G_{\varphi\cup\psi}^x)$ with only two equivalence classes $\varphi$
  and $\psi$.  Therefore, by Lemma~\ref{lem:nonempty_intersect} we can
  conclude that $V(G_{\psi}^z)\cap V(G_{\varphi}^x)\neq\emptyset$.  Let
  $y\in V(G_{\psi}^z)\cap V(G_{\varphi}^x)$.  It follows that
  $G_{\psi}^z=G_{\psi}^y$ and thus, $z\in \bigcup_{y\in V(G_{\varphi}^x)}
  V(G_{\psi}^y)$ since in particular $y\in V(G_{\varphi}^x)$. From
  $V(G_{\varphi\cup\psi}^x)\subseteq \bigcup_{y\in V(G_{\varphi}^x)}
  V(G_{\psi}^y)$ we conclude equality of the sets.
\end{proof}

\begin{proposition}
  Let $R$ be a USP-relation on the edge set $E(G)$ of a connected graph $G$
  and let $\vp, \psi\sqsubseteq R$, $\vp\neq\psi$.  Then 
  $V(G_{\vpo}^x)\cap V(G_{\overline\psi}^x) = 
   V(G_{\overline{\varphi\cup\psi}}^x)$ 
  if and only if $V(G_{\varphi}^x)\cap V(\Go^x)\subseteq
  V(G_{\overline{\varphi\cup\psi}}^x)$.
\end{proposition}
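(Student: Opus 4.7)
The plan is to recognize first that the inclusion $V(G_{\overline{\vp\cup\psi}}^x) \subseteq V(G_{\vpo}^x) \cap V(G_{\overline\psi}^x)$ holds automatically, since $\overline{\vp\cup\psi} = \overline{\vp} \cap \overline{\psi}$ and so every $\overline{\vp\cup\psi}$-path is simultaneously a $\vpo$-path and a $\overline\psi$-path. Hence the asserted equality is equivalent to the reverse inclusion, and the proposition amounts to the equivalence of two inclusion statements.

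For the direction $(\Rightarrow)$, I would take an arbitrary $v \in V(G_\vp^x) \cap V(G_{\vpo}^x)$; a $\vp$-path from $x$ to $v$ uses only edges of $\vp$, which are disjoint from $\psi$, so this is also a $\overline\psi$-path. Thus $v$ lies in $V(G_{\vpo}^x) \cap V(G_{\overline\psi}^x)$, and the assumed equality places $v$ in $V(G_{\overline{\vp\cup\psi}}^x)$.

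The direction $(\Leftarrow)$ is the substantive one, and my plan is to invoke Proposition~\ref{prop:components} after a single coarsening step. Let $R'$ denote the coarsening of $R$ whose classes are exactly $\vp$, $\psi$, and $\overline{\vp\cup\psi}$; since any $Q \subseteq R$ witnessing the unique square property for $R$ also refines $R'$, the relation $R'$ is itself a USP-relation. Applying Proposition~\ref{prop:components} to $R'$ first with the pair of classes $(\vp, \overline{\vp\cup\psi})$ (whose union is $\overline\psi$) and then with $(\psi, \overline{\vp\cup\psi})$ (whose union is $\vpo$) produces
\begin{equation*}
V(G_{\overline\psi}^x) = \bigcup_{y \in V(G_\vp^x)} V(G_{\overline{\vp\cup\psi}}^y), \qquad
V(G_{\vpo}^x) = \bigcup_{z \in V(G_\psi^x)} V(G_{\overline{\vp\cup\psi}}^z).
\end{equation*}
For any $v \in V(G_{\vpo}^x) \cap V(G_{\overline\psi}^x)$, there exist $y \in V(G_\vp^x)$ and $z \in V(G_\psi^x)$ with $v \in V(G_{\overline{\vp\cup\psi}}^y) \cap V(G_{\overline{\vp\cup\psi}}^z)$; since the $\overline{\vp\cup\psi}$-components partition $V(G)$, this forces $V(G_{\overline{\vp\cup\psi}}^y) = V(G_{\overline{\vp\cup\psi}}^z)$, so $y$ and $z$ lie in the same $\overline{\vp\cup\psi}$-component.

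Finally I would close the argument as follows. Because $z \in V(G_\psi^x)$ and $\psi \subseteq \overline\vp$, the vertex $z$ is $\vpo$-connected to $x$; because $y$ and $z$ are $\overline{\vp\cup\psi}$-connected and $\overline{\vp\cup\psi} \subseteq \overline\vp$, the vertex $y$ is $\vpo$-connected to $z$, hence to $x$, so $y \in V(G_{\vpo}^x)$. Thus $y$ lies in $V(G_\vp^x) \cap V(G_{\vpo}^x)$, which by hypothesis is a subset of $V(G_{\overline{\vp\cup\psi}}^x)$, giving $V(G_{\overline{\vp\cup\psi}}^y) = V(G_{\overline{\vp\cup\psi}}^x)$ and hence $v \in V(G_{\overline{\vp\cup\psi}}^x)$. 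The main obstacle I anticipate is spotting the correct coarsening $R'$: Proposition~\ref{prop:components} is stated for two classes of a USP-relation, so one must recognize that merging all $R$-classes other than $\vp$ and $\psi$ into the single class $\overline{\vp\cup\psi}$ preserves the USP property and makes the proposition directly applicable.
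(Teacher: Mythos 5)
Your proof is correct and uses essentially the same approach as the paper: both arguments pivot on Proposition~\ref{prop:components} applied with the class pair(s) drawn from the coarsening whose classes are $\varphi$, $\psi$ and $\overline{\varphi\cup\psi}$ (a coarsening step you make explicit and the paper leaves implicit). The only difference is organizational: the paper computes $V(G_{\overline{\varphi}}^x)\cap V(G_{\overline{\psi}}^x)$ in one stroke as the disjoint union of $V(G_{\overline{\varphi\cup\psi}}^x)$ with the $\overline{\varphi\cup\psi}$-components rooted at vertices of $\bigl(V(G_{\varphi}^x)\cap V(G_{\overline{\varphi}}^x)\bigr)\setminus V(G_{\overline{\varphi\cup\psi}}^x)$, so that both directions of the equivalence drop out simultaneously, whereas you verify the two implications separately by element chasing.
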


\begin{proof}
  From Prop.~\ref{prop:components} we can compute
  $V(G_{\vpo}^x)\cap V(G_{\overline\psi}^x)
    = V(G_{\vpo}^x)\cap V(G_{\varphi\cup(\overline{\varphi\cup\psi})}^x)
    = \\V(\Go^x)\cap
    \left(\bigcup_{w\in V(G_{\varphi}^x)}
          V(G_{\overline{\varphi\cup\psi}}^w)\right)
    =\bigcup_{w\in V(G_{\varphi}^x)}
         \left(V(\Go^x)\cap V(G_{\overline{\varphi\cup\psi}}^w)\right)
  $.\\[0.2cm]
Notice that $V(G_{\overline{\vp\cup\psi}}^v)\subseteq V(\Go^x)$ if and only
if $v\in V(\Go^x)$, otherwise we would have 
$V(\Go^x)\cap V(G_{\overline{\varphi\cup\psi}}^v = \emptyset$. 
Therefore,  \\
$V(G_{\vpo}^x)\cap V(G_{\overline\psi}^x)
    =\bigcup_{w\in V(G_{\varphi}^x)\cap V(\Go^x)} 
    V(G_{\overline{\varphi\cup\psi}}^w)
    = V(G_{\overline{\varphi\cup\psi}}^x)\overset{\dotfill}{\cup}
    \left(\bigcup_{w\in \mc X} 
    V(G_{\overline{\varphi\cup\psi}}^w)\right)
$ 
with $\mc X = V(G_{\varphi}^x)\cap V(\Go^x)\setminus 
V(G_{\overline{\varphi\cup\psi}}^x)$\\[0.2cm]
Hence, we have $V(G_{\vpo}^x)\cap V(G_{\overline\psi}^x)=
V(G_{\overline{\varphi\cup\psi}}^x)$ if and only if $\mc X = \emptyset$
which is equivalent to $V(G_{\varphi}^x)\cap V(\Go^x)\subseteq
V(G_{\overline{\varphi\cup\psi}}^x)$.
\end{proof}

\begin{proposition} \label{lem:subsets}
  Let $R$ be a USP-relation on the edge set $E(G)$ of a connected graph $G$ 
  with two distinct equivalence classes $\varphi,\psi\eqcl R$.
  \begin{itemize}
  \item[(1)] If $V(\G^x)\subseteq V(G_{\psi}^x)$ for some $x\in V(G)$
    then $V(\G^y)\subseteq V(G_{\psi}^x)$
    holds for all $y\in V(G_{\psi}^x)$.
  \item[(2)] If $V(\G^x)\subseteq V(\Go^x)$ for some $x\in V(G)$ 
    then $V(\Go^x)=V(G)$.
  \item[(3)] If $V(\Go^x)=V(G)$, $x\in V(G)$,
    then for all $y\in V(G)$ holds 
    $V(\G^y)\cap V(\Go^y)\subseteq V(G_{\overline{\varphi\cup\psi}}^y)$
    if and only if $V(\G^y)\subseteq V(G_{\overline{\varphi\cup\psi}}^y)$.
  \end{itemize}
\end{proposition}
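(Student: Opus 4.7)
The plan is to dispatch the three parts in order, using the square-lifting implicit in Observation~\ref{obs:squares} as the main tool.

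For (1), I will fix $y\in V(\Gpsi^x)$ and $z\in V(\G^y)$, choose a $\psi$-path $y=w_0,w_1,\ldots,w_k=x$ together with a $\varphi$-path $y=u_0,u_1,\ldots,u_l=z$, and build a grid of vertices $a_{i,j}$ by iterated square completion. Concretely, row $0$ is the given $\varphi$-path ($a_{0,j}=u_j$) and column $0$ is the given $\psi$-path ($a_{i,0}=w_i$). To pass from row $i$ to row $i+1$, I use the adjacent edges $[w_i,w_{i+1}]\in\psi$ and $[w_i,a_{i,1}]\in\varphi$, which lie in different $R$-classes, so Observation~\ref{obs:squares} produces a chordless square with opposite edges in the same class; this places $a_{i+1,1}$ with $[w_{i+1},a_{i+1,1}]\in\varphi$ and $[a_{i,1},a_{i+1,1}]\in\psi$. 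Iterating across the row yields $a_{i+1,j}$ for all $j$, and iterating across rows yields the full grid. Row $k$ is then a $\varphi$-path from $x$ to $z':=a_{k,l}$, and column $l$ is a $\psi$-path from $z$ to $z'$. The hypothesis $V(\G^x)\subseteq V(\Gpsi^x)$ gives $z'\in V(\Gpsi^x)$, so $V(\Gpsi^{z'})=V(\Gpsi^x)$ and therefore $z\in V(\Gpsi^x)$.

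For (2), the same grid-lifting argument goes through when the $\psi$-path is replaced by a $\overline{\varphi}$-path: the transverse edges may now lie in several different equivalence classes of $R$, but each such class is distinct from $\varphi$, which is all that is needed to invoke Observation~\ref{obs:squares}. This yields the auxiliary claim that if $V(\G^x)\subseteq V(\Go^x)$ and $y\in V(\Go^x)$, then $V(\G^y)\subseteq V(\Go^x)$. Assume for contradiction that $V(\Go^x)\ne V(G)$. By connectedness of $G$, there is an edge $[u,v]$ with $u\in V(\Go^x)$ and $v\notin V(\Go^x)$; this edge must be a $\varphi$-edge, for otherwise $v\in V(\Go^u)=V(\Go^x)$. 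But then $v\in\N(u)\subseteq V(\G^u)\subseteq V(\Go^x)$ by the auxiliary claim, contradicting the choice of $v$.

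For (3), the hypothesis $V(\Go^x)=V(G)$ says that $\Go$ is connected, so $V(\Go^y)=V(\Go^x)=V(G)$ for every $y\in V(G)$; consequently $V(\G^y)\cap V(\Go^y)=V(\G^y)$, and the two inclusions in the claimed equivalence become literally the same condition, whence the biconditional is immediate.

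The principal obstacle is the bookkeeping in the grid-lifting argument of (1), and its adaptation in (2): at each square-spanning step one must verify that the two chosen adjacent edges lie in different $R$-classes so that Observation~\ref{obs:squares} applies, and one must track the class labels of the opposite edges produced in each square consistently across rows and columns so that the final row and the final column really are a $\varphi$-path and a $\psi$- (respectively $\overline{\varphi}$-) path. This is routine but notation-heavy; everything else follows cleanly from the stated hypotheses.
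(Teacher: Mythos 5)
Your proof is correct, but it is organized differently from the paper's. For part (1) the paper builds no grid: it sets $X:=\{v\in V(G_{\psi}^x)\mid V(G_{\varphi}^v)\subseteq V(G_{\psi}^x)\}$, notes $X\neq\emptyset$ by hypothesis, and derives a contradiction from connectedness of $G_{\psi}^x$ if $X\neq V(G_{\psi}^x)$: a $\psi$-edge $[v,y]$ with $v\in X$, $y\notin X$, together with a vertex $w\in V(G_{\varphi}^y)\setminus V(G_{\psi}^x)$, yields via Lemma~\ref{lem:neighbors} a vertex $z\in V(G_{\varphi}^v)$ with $[z,w]\in\psi$, forcing $w\in V(G_{\psi}^x)$. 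Thus the lifting machinery is hidden inside the already-proved Lemma~\ref{lem:neighbors} (whose own proof is precisely the one-dimensional version of your walk-lifting), whereas you rebuild it from scratch out of Observation~\ref{obs:squares}. For part (2) the paper reduces to part (1) with $\psi$ replaced by $\overline{\varphi}$; strictly speaking $\overline{\varphi}$ is a union of $R$-classes rather than a single class, so that reduction tacitly needs exactly the point you make explicit --- that the transverse edges may lie in several classes, each distinct from $\varphi$. Your version therefore patches a small gap the paper leaves implicit; the closing connectedness argument of (2) and the trivial treatment of (3) coincide in both proofs. What your route buys is self-containedness (only Observation~\ref{obs:squares} is invoked); what the paper's buys is brevity, by reusing Lemma~\ref{lem:neighbors} instead of repeating the lifting. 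One cosmetic remark: the rows and columns of your grid are walks rather than paths (vertices may repeat), which is harmless, since membership in a connected component only requires a walk.
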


\begin{proof}
  \begin{itemize}
  \item[(1)] Let 
    $X:=\left\{v\in V(G_{\psi}^x)\mid V(\G^v)\subseteq V(G_{\psi}^x)\right\}$.
    If $V(\G^x)\subseteq V(G_{\psi}^x)$ then $X\neq\emptyset$.
    Suppose $V(G_{\psi}^x)\setminus X\neq\emptyset$.
    By connectedness of $G_{\psi}^x$, there exists some vertices 
    $y\in V(G_{\psi}^x)\setminus X$
    and $v\in X$ such that $[v,y]\in\psi$. Clearly, $y\notin V(\G^v)$.
    Since $y\notin X$, there exists a vertex 
    $w\in V(\G^y)\setminus V(G_{\psi}^x)$.
    Since $[v,y]\in\psi$, we can use Lemma~\ref{lem:neighbors} to conclude 
    that there exist a vertex $z\in V(\G^v)$ such that $[z,w]\in\psi$.
    This implies $w\in V(G_{\psi}^z)=V(G_{\psi}^x)$, since $v\in X$, a 
    contradiction.
  \item[(2)] Let $V(\G^x)\subseteq V(\Go^x)$ and suppose 
    $V(G)\setminus V(\Go^x)\neq\emptyset$.
    By connectedness of $G$, there exist vertices $v\in V(\Go^x)$ and 
    $y\in V(G)\setminus V(\Go^x)$
    such that $[v,y]\in E(G)$. Obviously, $[v,y]$ must be in $\varphi$.
    Hence, $y\in V(\G^w)$. From the first assertion, we conclude that 
    this implies $y\in V(\Go^x)$, a contradiction.
  \item[(3)] Clear.
  \end{itemize}
\end{proof} 

 We conclude our presentation by summarizing conditions under which the
  joining of two equivalence classes of a USP-relation does not affect the
  partitioning of the vertex set.

\begin{corollary}
  Let $R$ be a USP-relation on the edge set $E(G)$ of a connected graph $G$
  with two distinct equivalence classes $\varphi,\psi\eqcl R$ and denote by
  $S$ be the USP-relation obtained from $R$ by joining $\varphi$ and
  $\psi$. Then:
  \begin{itemize}
  \item[(1)] $\mc P^R=\mc P^S$ if $\vp$ or $\psi$ belong to a factor of $G$.
  \item[(2)] If here is a vertex $x\in V(G)$ with 
    $V(\G^x)\subseteq V(\Go^x)$ then 
    $\mc P^R=\mc P^S$ if and only if 
    $V(\G^y)\subseteq V(G_{\overline{\varphi\cup\psi}}^y)$
    holds for all $y\in V(G)$.
  \item[(3)] If $\mc P^R=\mc P^S$ then $\vp\cup\psi$ belongs to a factor 
    of $G$ if and only if
    both $\vp$ and $\psi$ belong to a factor of $G$.
  \end{itemize}
\end{corollary}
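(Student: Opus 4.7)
All three parts rest on a common reduction. Since $\overline{\vp\cup\psi}=\vpo\cap\overline\psi$, we have $V(G_{\overline{\vp\cup\psi}}^y)\subseteq V(\Go^y)\cap V(G_{\overline\psi}^y)$, and therefore $V_S(y)\subseteq V_R(y)$ for every $y$. Writing $D(y):=\bigcap_{\vartheta\eqcl R,\;\vartheta\neq\vp,\psi}V(G_{\overline\vartheta}^y)$, one has $V_R(y)=V(\Go^y)\cap V(G_{\overline\psi}^y)\cap D(y)$ and $V_S(y)=V(G_{\overline{\vp\cup\psi}}^y)\cap D(y)$, so $\mc P^R=\mc P^S$ amounts to $V_R(y)=V_S(y)$ for every $y$. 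By the unnumbered proposition preceding Prop.~\ref{lem:subsets}, a sufficient condition for this equality is $V(\G^y)\cap V(\Go^y)\subseteq V(G_{\overline{\vp\cup\psi}}^y)$ for every $y$.

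For (1), I interpret ``$\vp$ belongs to a factor of $G$'' via Theorem~\ref{thm:prodrel} as: the relation $\{\vp,\vpo\}$ is a product relation, equivalently $V(\G^y)\cap V(\Go^y)=\{y\}$ for every $y\in V(G)$. Since $y\in V(G_{\overline{\vp\cup\psi}}^y)$ trivially, the sufficient condition above is satisfied, so $\mc P^R=\mc P^S$. The case ``$\psi$ belongs to a factor'' is handled by exchanging the roles of $\vp$ and $\psi$, which is legitimate because both $V(\Go^y)\cap V(G_{\overline\psi}^y)$ and the union $\vp\cup\psi$ are symmetric in these two classes.

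For (2), Prop.~\ref{lem:subsets}(2) together with the connectedness of $G_{\overline\vp}$ upgrades the hypothesis $V(\G^x)\subseteq V(\Go^x)$ for some $x$ to $V(\Go^y)=V(G)$ for every $y$; under this simplification $V_R(y)=V(G_{\overline\psi}^y)\cap D(y)$. The backward implication then follows by combining Prop.~\ref{lem:subsets}(3) (which rewrites the assumed inclusion in the form $V(\G^y)\cap V(\Go^y)\subseteq V(G_{\overline{\vp\cup\psi}}^y)$) with the unnumbered proposition. For the forward implication, any $z\in V(\G^y)$ satisfies $z\in V(G_{\overline\vartheta}^y)$ for each $\vartheta\neq\vp$ (a $\vp$-path from $y$ to $z$ certifies this), so $z\in V(G_{\overline\psi}^y)\cap D(y)=V_R(y)=V_S(y)\subseteq V(G_{\overline{\vp\cup\psi}}^y)$, giving the desired inclusion for every $y$.

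For (3), the easy direction---if both $\vp,\psi$ belong to factors then so does $\vp\cup\psi$---follows from the uniqueness of the Cartesian prime factorization: the disjoint edge classes correspond to distinct prime factors, and their union is the edge set of the product of these two factors. I expect the converse to be the main obstacle. The plan is to use that $\{\vp\cup\psi,\overline{\vp\cup\psi}\}$ being a product relation yields $G=G_{\vp\cup\psi}^x\BOX G_{\overline{\vp\cup\psi}}^x$, so it suffices to exhibit $G_{\vp\cup\psi}^x$ as $\G^x\BOX G_\psi^x$. To this end, I will apply Theorem~\ref{thm:prodrel} to the restricted USP-relation $\{\vp,\psi\}$ on $E(G_{\vp\cup\psi}^x)$, invoking the hypothesis $\mc P^R=\mc P^S$ to force $|V(\G^u)\cap V(G_\psi^v)|=1$ for all $u,v\in V(G_{\vp\cup\psi}^x)$; the resulting factorization $G_{\vp\cup\psi}^x\cong\G^x\BOX G_\psi^x$ then shows that both $\vp$ and $\psi$ belong to factors of $G$.
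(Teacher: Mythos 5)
Your parts (1) and (2) are correct and match the paper's treatment: (1) is exactly the paper's reduction (a factor class gives $V(\G^x)\cap V(\Go^x)=\{x\}\subseteq V(G_{\overline{\vp\cup\psi}}^x)$, then the unnumbered proposition applies), and in (2) you actually supply the details — the splitting $V_R(y)=V(\Go^y)\cap V(G_{\overline\psi}^y)\cap D(y)$ and the $\vp$-path argument showing $V(\G^y)\subseteq V_R(y)$ — that the paper compresses into ``follows immediately from Proposition~\ref{lem:subsets}''.

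Part (3) is where you have a genuine gap, and you half-admit it: the converse direction is a \emph{plan}, and its pivotal step — that $\mc P^R=\mc P^S$ together with $\vp\cup\psi$ being a factor ``forces'' $|V(\G^u)\cap V(G_{\psi}^v)|=1$ — is asserted, not proved. But that step \emph{is} the content of the statement; everything around it is bookkeeping. The derivation you are missing is precisely the argument you already used in (2): for $z\in V(\G^x)\cap V(\Go^x)$, a $\vp$-path from $x$ to $z$ certifies $z\in V(G_{\overline\vartheta}^x)$ for every class $\vartheta\neq\vp$, hence $z\in V_R(x)=V_S(x)\subseteq V(G_{\overline{\vp\cup\psi}}^x)$; since also $z\in V(\G^x)\subseteq V(G_{\vp\cup\psi}^x)$ and $\vp\cup\psi$ belongs to a factor, $z\in V(G_{\vp\cup\psi}^x)\cap V(G_{\overline{\vp\cup\psi}}^x)=\{x\}$. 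Thus $V(\G^x)\cap V(\Go^x)=\{x\}$ for all $x\in V(G)$, and Theorem~\ref{thm:prodrel} applied \emph{to $G$ itself} with the relation $\{\vp,\vpo\}$ gives that $\vp$ belongs to a factor; symmetrically for $\psi$. This is the paper's proof.

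Moreover, the route you chose introduces a second problem that the direct argument avoids. Applying Theorem~\ref{thm:prodrel} to the restriction $\{\vp,\psi\}$ on $E(G_{\vp\cup\psi}^x)$ (which, incidentally, requires checking that this restriction is again a USP-relation — true, since the relevant unique squares have all four edges in $\vp\cup\psi$, cf.\ Lemma~\ref{lem:subgraph}, but it needs saying) only yields a factorization of each component $G_{\vp\cup\psi}^x$. That does not by itself mean $\vp$ belongs to a factor \emph{of $G$}: you would still have to show either that these fiberwise factorizations cohere across all $G_{\vp\cup\psi}$-fibers with the decomposition $G\cong G_{\vp\cup\psi}^x\BOX G_{\overline{\vp\cup\psi}}^x$, or verify the global hypothesis $|V(\G^u)\cap V(\Go^v)|=1$ for all $u,v\in V(G)$ (e.g.\ via Proposition~\ref{prop:components}, writing $V(\Go^v)$ as the union of the sets $V(G_{\psi}^w)$ over $w\in V(G_{\overline{\vp\cup\psi}}^v)$ and noting that exactly one such $w$ meets the $G_{\vp\cup\psi}$-fiber of $u$). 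Neither is in your proposal; the paper's one-step application of Theorem~\ref{thm:prodrel} to $G$ sidesteps both.
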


\begin{proof}
 \begin{itemize}
 \item[(1)]   W.l.o.g., let $\varphi$ correspond to a factor of $G$. Then it 
    holds $V(\G^x)\cap V(\Go^x)=\{x\}\subseteq
    V(G_{\overline{\vp\cup\psi}})$ 
    for all $x\in V(G)$, which implies the assertion.
 \item[(2)]   Follows immediately from Proposition~\ref{lem:subsets}.
  \item[(3)]  If both $\vp$ and $\psi$ correspond to factors, then clearly 
    $\vp\cup\psi$ also corresponds to a factor.
    Conversely, suppose $\vp\cup\psi$ correspond to a factor and suppose 
    $\mc P^R=\mc P^S$. Then 
    $|V(G_{\varphi\cup\psi}^x)\cap V(G_{\overline{\varphi\cup\psi}}^x)|=1$
    and $V(\G^x)\cap V(\Go^x)\subseteq V(G_{\overline{\vp\cup\psi}})$ holds
    for all $x \in V(G)$. Note that $V(\G^x)\subseteq V(G_{\vp\cup\psi}^x)$
    and hence  $V(\G^x)\cap V(\Go^x)=(V(\G^x)\cap V(\Go^x))\cap 
    V(G_{\vp\cup\psi}^x)\subseteq V(G_{\overline{\vp\cup\psi}})
    \cap V(G_{\vp\cup\psi}^x)$ holds for all $x\in V(G)$. This implies
    $V(\G^x)\cap V(\Go^x)=\{x\}$ for all $x\in V(G)$.
    By Theorem~\ref{thm:prodrel} we can now conclude that $\vp$ belongs 
    to a factor of $G$. 
    Analogously, it follows that $\psi$ belongs to a factor of $G$.  
 \end{itemize}
\end{proof}

\section*{Acknowledgments}
  This work was supported in part by the \emph{Deutsche
    Forschungsgemeinschaft} within the EUROCORES Programme EUROGIGA
  (project GReGAS) of the European Science Foundation.



\end{article}

\end{document}